\documentclass[10pt, journal, letterpaper]{IEEEtran}
\ifCLASSOPTIONcompsoc
   \usepackage[tight,normalsize,sf,SF]{subfigure}
\else
   \usepackage[tight,normalsize]{subfigure}
\fi

\usepackage{booktabs}

\usepackage{color,ifthen,graphicx,enumerate,float,url}
\usepackage{mathrsfs}
\usepackage{epstopdf}
\usepackage{nicefrac}
\usepackage{pstool}
\usepackage{subfigure}
\usepackage{caption}

\usepackage{amsmath,amssymb}
\usepackage{bm}
\usepackage{breakurl}
\usepackage{caption}
\usepackage{latexsym}
\usepackage{amsmath}
\usepackage{amsfonts}
\usepackage{amsthm}
\usepackage{extarrows}
\usepackage{multirow}
\usepackage{epsfig}
\usepackage{indentfirst}
\usepackage{multirow}
\usepackage{algorithm}
\usepackage{algorithmic}
\usepackage{float}

\newtheorem{theorem}{Theorem\,}
\newtheorem{fact}{Fact\,}
\newtheorem{definition}{Definition\,}
\newtheorem{proposition}[theorem]{Proposition\,}

\newtheorem{assumption}[theorem]{Assumption\,}

\newtheorem{conjecture}[theorem]{Conjecture\,}
\newtheorem{coro}[theorem]{Corollary\,}

\newcommand{\dif}{\mathrm{d}}


\begin{document}

\title{Recommending Paths: Follow or Not Follow?}
\author{
Yunpeng~Li,~Costas~Courcoubetis,~and~Lingjie~Duan
\thanks{Y. Li, C. Courcoubetis and L. Duan are with the Engineering Systems and Design Pillar, Singapore University of Technology and Design, Singapore 487372 (e-mail: yunpeng\_li@mymail.sutd.edu.sg, \{costas, lingjie\_duan\}@sutd.edu.sg).}
}
\maketitle

\begin{abstract}
Mobile social network applications constitute an important platform for traffic information sharing,
helping users collect and share
sensor information about the driving conditions they experience on the traveled path in real time.
In this paper we analyse the simple but fundamental model
of a platform choosing between two paths: one
with known deterministic travel cost and the other
that alternates over time between a low and a high random cost states,
where the low and the high cost states are only partially observable and perform respectively better and worse
on average than the fixed cost path.
The more users are routed over the stochastic path,
the better the platform
can infer its actual state and use it efficiently.

At the Nash equilibrium, if asked to take the riskier path, in many cases selfish users (that are allowed to have access to the information collected by the platform) will myopically disregard the optimal path suggestions of the platform, leading to a suboptimal system without enough exploration on the stochastic path.
We prove the interesting result that if the past collected information is hidden from users,
the system becomes incentive compatible and even `sophisticated' users
(in the sense that they have full capability to reverse-engineer the platform's recommendation
and derive the path state distribution conditional on the recommendation)
prefer to follow the platform's recommendations.
In a more practical setting where the platform implements a model-free Q-learning algorithm to
minimise the social travel cost,
our analysis suggests that increasing the accuracy of the learning algorithm increases
the range of system parameters for which sophisticated users follow the recommendations of the platform,
becoming in the limit fully incentive compatible.
Finally, we extend the two-path model to include more stochastic paths, and show that incentive compatibility
holds under our information restriction mechanism.
\end{abstract}

\section{Introduction}

Given millions of inter-connected smartphones\footnote{
Smartphones are equipped with various sensors such as camera, GPS and accelerometer which enable mobile users to easily sense
many real-time traffic conditions when they drive \cite{bhoraskar2012wolverine}.}
and in-vehicle sensors sold annually,
it is promising to leverage the crowd for data sensing and sharing.
 Mobile social network applications constitute an important platform for traffic information sharing,
helping users collect and share
real-time sensor information about the driving conditions they experience on the traveled path,
see \cite{lequerica2010drive}.
Platforms inform new travelers of the paths they should take
by aggregating information from other users that used these paths in the past and
recommending a path with the least estimated current cost for travelling.
 For example, Waze
 uses a mobile social network
 platform for drivers to share traffic and road information.
 Another example is Google Map which uses real-time traffic data shared by
 hundreds of millions of people around the world
 to analyse traffic and road conditions \cite{google}.

All these platforms estimate the \emph{current} average cost of the
alternative paths and suggest the least costly paths to their travelers.
Obviously, a selfish user will follow such a myopic suggestion. But would she
follow the suggestions of an optimal (social cost optimising in the long run) platform that
frequently explores riskier paths in case these become superior over time?
This incentive issue becomes even more important since
our Price of Anarchy analysis (see Section \ref{myopic}) suggests
that myopic platforms whose recommendations users are likely to follow
can be arbitrarily bad in term of efficiency compared to the optimal platform.
%
%

In this paper we illustrate the above issues by considering the simple but
fundamental case of jointly routing and learning in a context where users
decide their trips from point $A$ to point $B$ by choosing between two paths P1 and P2.
P2 has a fixed user driving cost whereas P1 has driving conditions (e.g., visibility, `black ice' segments, congestion)
that alternate between a `good' and a `bad' states according to a two-state partially observable Markov chain with known transition probabilities,
influencing the expected driving cost over the path.
When in good (bad) condition, P1 has lower (higher) expected cost than P2.
By aggregating information about the actual cost experienced by users that traveled over P1, a mobile platform can
estimate its current state, and make the appropriate recommendation to future travelers.
Selfish users deciding on their current trip would prefer P1 only if its current expected cost
conditioned on the available information is less than the known cost of P2.

But there are additional reasons to explore P1 even if it momentarily looks on average worse than P2.
An `altruistic' user would take this ex-ante costlier path in order to increase system information about P1.
With little luck, finding P1 in its good state will benefit future travelers which will exploit this information.
Hence a socially optimal platform would advise at appropriate times some of the users to use paths
that are myopically suboptimal to them.
Unfortunately, this is not a Nash equilibrium strategy for the system since
without appropriate incentives, selfish users will always choose
the path with the least current expected cost.
This results in exploring the stochastic path P1 less frequently than socially desired.

We show that the myopic routing strategy achieves a Price of Anarchy (PoA)
that can be arbitrarily large
compared to the
case that users follow the recommendations of the optimal platform.
We prove that by restricting information the incentives of the users become aligned with the
incentives of the social planner: simply hide the information reported by the past travellers
and recommend the socially optimal path choice to current travelers.
Using a correlated equilibrium concept, we show that the
equilibrium strategy of the users is to follow the recommendations of the optimal platform.

Numerous works have been done on traffic estimation based on information sharing by travelers (e.g., \cite{li2009performance}, \cite{castro2012urban}).
Our paper does not deal with technical details on how to aggregate and process information or
on how to architect such systems. It provides a simple conceptual model for
user incentive mechanism design when there are exploration-exploitation trade-offs.
On a different direction, exploration-exploitation in optimal decision making  is well studied in classical multi-armed bandit problems
where decisions are made centrally (e.g., \cite{gittins2011multi}).
In our  model we have multiple bandits (corresponding to paths) but we cannot force the optimal
sequence of choosing arms due to users' selfishness. Each machine will be played in a myopic sense if full information is disclosed.

Incentive mechanism design for participation in crowdsensing platforms
has been well studied recently (e.g., \cite{yang2012crowdsourcing}, \cite{chorppath2013trading}, \cite{duan2012incentive}).
In our case participation is not an issue since we prove that users always gain by participating.
As a parallel to the case of allocating tasks to agents, our goal is to
incentivise agents to accept tasks that may not be optimal for them, but create the best results
for the rest of the community.
Similar to our idea, in the economics literature there are some recent work
(\cite{kremer2014implementing,frazier2014incentivizing}) for motivating the wisdom of the crowd.
Yet, \cite{kremer2014implementing} did not look at a dynamic Markov chain model for long-term forecasting
and \cite{frazier2014incentivizing} requires incentive payments (which is not possible for many traffic recommendation applications).
Instead, we model and analyze a more interesting but complex partially observable Markov decision process (POMDP),
and propose a payment-free incentive mechanism for the POMDP model.
Further, we study the incentive compatibility of model-free reinforcement learning,
which approximates the complex POMDP policy and is easy to implement in practice.
Our main contributions are:

\begin{itemize}
\item We formulate a joint routing and learning model for users making travel path choices.
The POMDP model is simple but powerful enough to formulate some key problems
in incentive compatible platform design.
The optimal policy for recommending paths may prefer paths with higher average costs
to exploit their low cost states. This policy serves as a benchmark for
efficiency comparisons with other policies.
\item
Although the optimal policy cannot be derived in closed form, we compute the
Price of Anarchy (PoA) of myopic decision making by comparing to the optimum.
If platforms (or users) minimise the
short term travel cost,
PoA is equal to $1/(1-\beta)$, where $\beta\in(0,1)$ is the discount factor used in the optimal policy. This tells that myopic platforms whose recommendations
users are likely to follow can be arbitrarily bad.
\item
We consider the challenging case of `sophisticated' users: such a user
has full system information (i.e., system parameters and the used
POMDP to derive the optimal policy).
If we allow such users to access the travel information collected by the platform from past travelers,
the system with sophisticated users has an equilibrium that corresponds to using the myopic policy. Accordingly, we propose an information restriction mechanism such that the equilibrium is to follow the recommendations of
the optimal policy, achieving PoA =1.
\item
In practice, an approximation of the optimal policy can be obtained via  reinforcement learning.
We consider the  incentive compatibility of the platform using Q-learning.
We numerically  show that the more accurate the learning algorithm is,
the `more' incentive compatible the system with restricted information becomes.
We further extend the two-path model to include more stochastic paths, and show that the incentive compatibility is easier to ensure under our information restriction mechanism.\end{itemize}

The rest of the paper is organized as follows. Section~\ref{systemmodel} introduces the network model
and formulates the problem as a POMDP over a belief state about the paths.
Section \ref{optimal} presents the optimal platform design
and Section \ref{myopic} presents two myopic platforms as comparison benchmarks.
Section \ref{mechanism} shows the incentive mechanism design for myopic users.
Section \ref{RL} presents the model-free optimization technique of Q-learning
and analyses the incentive compatibility issues, and Section \ref{multipath} extends the two-path model for examining users' incentive compatibility.
Section \ref{conclusion} concludes.


\section{System Model and Problem Formulation}\label{systemmodel}

\begin{figure*}
\centering
\subfigure[A two travel path network]{
\begin{minipage}[b]{0.45\linewidth}\label{tp}

  \centering
  \includegraphics[width=1.9in]{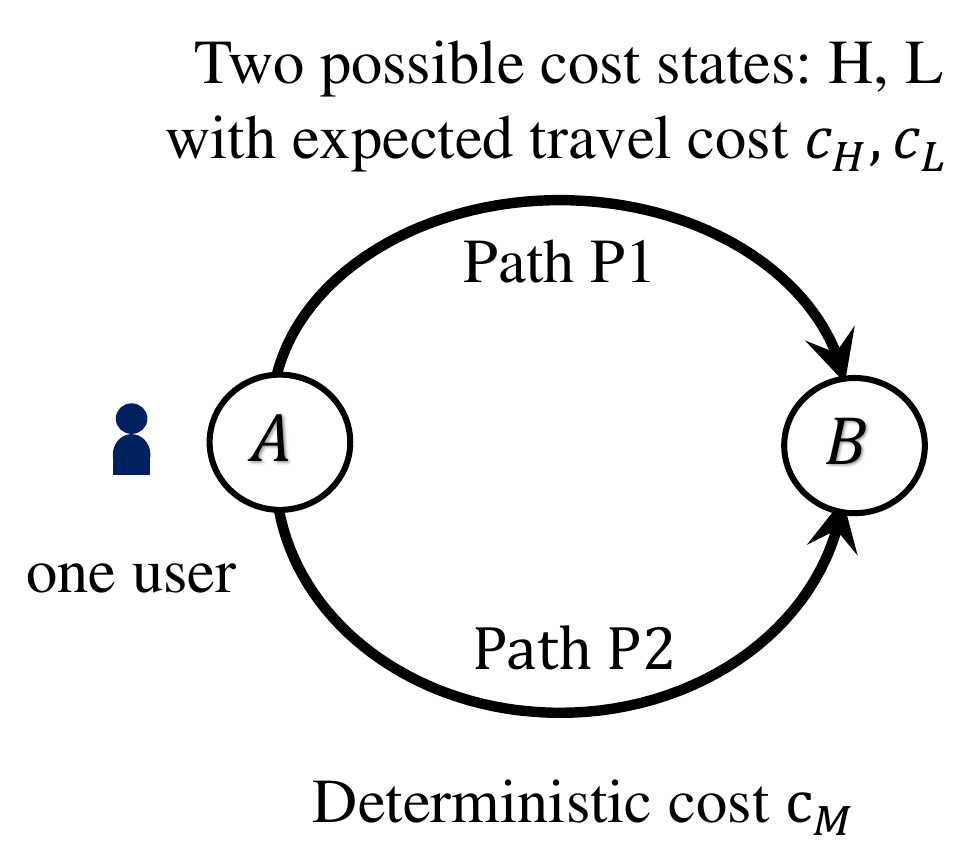}
\end{minipage}
}
\subfigure[The Markov chain for P1]{
\begin{minipage}[b]{0.45\linewidth}\label{ts}
 \centering
  \includegraphics[width=3in]{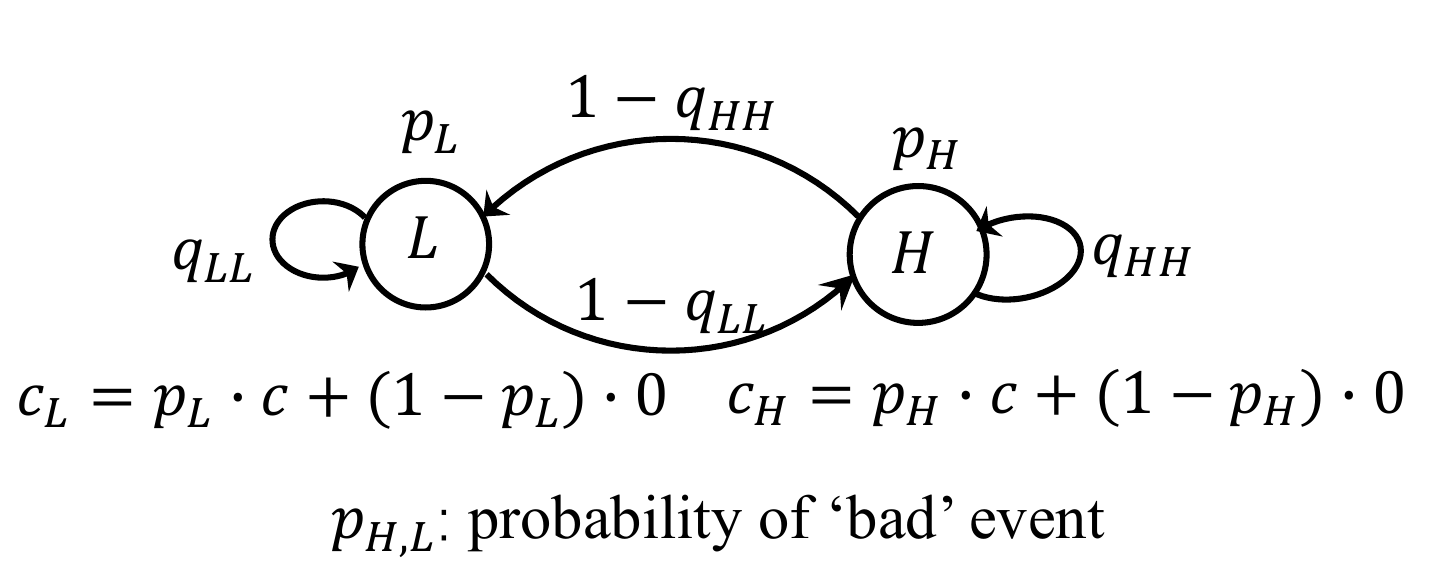}
\end{minipage}
}
\vspace{-5pt}
  \caption{Users that travel on P1 incur cost $c$ with probability $p_H$, $p_L$ ($p_H>p_L$) that depends on the cost sate of the path, which alternates between $H$ and $L$ according to the two-state Markov chain in Fig. \ref{ts} .
  }
  \label{tpts}\vspace{-10pt}
\end{figure*}
As mentioned in the Introduction, we model selfish behaviour of platform users. To make the problem
non-trivial we consider the challenging case that such users are `infinitely sophisticated' in terms of analytical and computational
capabilities and have full information about the system parameters and the platform algorithms.
We like to investigate the actions of such users and the corresponding results in social cost if
i) there is no platform recommending an action,
ii) the platform besides recommending a path is also
making available the full information collected so far by other users, and iii)
such information is hidden and only the current path recommendation is available.
To make the above problem well defined we use as a benchmark the case of
an optimal platform and then analyse what happens in the practical case of a platform that
uses machine learning, in particular using the Q-learning algorithm.

Our optimal platform makes routing decisions under uncertainty
capturing the fundamental tradeoff between exploring new possibilities versus exploiting optimally the current
information. To make the problem analytically tractable, we choose a network model that is simple
but fundamental enough to capture the essential aspects of making such routing decisions.


\subsection{Network Model}

We consider the simplest case where there are only two paths for our users to choose from:
one with deterministic cost
and another that alternates randomly between two states,
each such state generating a different average cost.
A platform user that travels along the stochastic path probes the path and experiences some actual cost which is
reported to the platform.
The platform collects these cost reports into a path history
and uses Bayesian inference  to determine the  probability that the path is in high or low cost state.
Though simple, this two-path network model captures the fundamental exploration-exploitation tradeoffs in
making routing decisions,
and makes users face the
incentive problems we like to analyse.\footnote{Note that our analysis can be easily extended to include multiple paths with deterministic costs in a larger network, by removing all the deterministic paths apart from the one with the smallest cost for routing consideration. Yet the analysis for multiple paths with time-varying costs is more involved and we need to update and balance the belief states of all stochastic paths. Still, Section \ref{multipath} provides some interesting results for developing the optimal threshold-based policy and examining the incentive compatibility for users to follow the platform recommendations.}

Our (road) network model with source node $A$ and destination node $B$ is in Fig. \ref{tpts}(a),
with two paths from the set \{P1, P2\}.
We consider an infinite discrete time horizon $t=1,2,\ldots$, and assume that during each discrete epoch there is
a single user
of our platform that must travel from $A$ to $B$ and must choose between paths P1 and P2.
In this abstract model a trip takes a single epoch to complete\footnote{We can easily extend the model where a trip
takes any fixed number of epochs.}.

We define the road condition experienced by a traveler on path P1 as a binary random variable $Y$:
\begin{itemize}
\item $Y=1$ is the event that a hazard occurs to the traveler
(e.g., poor visibility, `black ice' segments, congestion), i.e., driving on the path
generates some positive fixed driving cost $c$.
\item $Y=0$ is the event that no hazard occurs to the traveler;
without loss of generality we associate with this case a zero driving cost.
\end{itemize}
Users that drive on P1 observe the value of $Y$, and incur the corresponding cost depending whether $Y=1$ or $Y=0$.

To capture the randomness of the road condition of P1, we assume that  P1 alternates between two states $H$ and $L$ during $t=1,2,\ldots$
as a Markov chain with transition probabilities as in Fig. \ref{tpts}(b), and in each such state $Y$ is i.i.d. with a different distribution.
In state $H$ the probability of incurring a hazard $P[Y=1| H]=p_H$,
whereas in state $L$ this probability is $P[Y=1|L]=p_L$, where $p_L<p_H$.
Path P2 is always in a known cost state, generating cost $c_M$ such that $0\leq c_M\leq c$.\footnote{Otherwise, P2 will never be chosen due to its always higher cost than P1.}
Since $p_L<p_H$, $H$ corresponds to the high (expected) travel cost state, with average cost per traveller $c_H=p_H \cdot c$.
Similarly, $L$ is the low travel cost state with average cost $c_L=p_L\cdot c$.
Note that if P1 is in the high cost state, there is always some probability $1-p_H$ that a traveller incurs no hazard.
Similarly, if P1 is in the low cost state, there is still some probability $p_L$ that a traveller incurs a hazard.

A user that travels on P1 observes $Y$.
If $Y=1(0)$ we say that her observation is $1(0)$.
A user travelling along P2 observes nothing about the condition of path P1,
in which case we say her observation is $\emptyset$ (provides no information about P1 due to travel on P2).
A user  always shares  her observation about $Y$ with
the platform.
We denote the observation of a user that traveled at time $t$ by $y_t$, where $y_t\in\{0,1,\emptyset\}$.
The history of observations available to the platform by time $t$ corresponds to $(y_1,y_2,\ldots,y_t)$.

%
%

\subsection{Platform Information Model}\label{platform}

We next introduce how the platform works.
Given the history of observations $(y_1,y_2,\ldots,y_t)$,
it determines the probability that the path is in state $H$ or $L$ using Bayesian inferencing.
To avoid keeping an ever-increasing history of observations, we summarize the available
information equivalently into a single belief state $x_t$,
the probability that path P1 is in state $H$ \emph{just before} the travel of the user at time $t$. We denote  the platform's  initial belief state as $x_1$.

To make our Bayesian inferencing precise, we need  to define in our  model
our refined sequence of events from $t$ to $t+1$.
To do that we refine time and use $t^-,t,t^+$ as `micro' time refinements around time $t$ (where $t^+ < (t+1)^-$).
\begin{itemize}
\item At time $t^-$ there is no event occurring;
we just summarise our belief about P1's state based on the previous history:
compute the prior probability $x_t$, i.e., the probability for P1 being in $H$ just before $t$.
\item At time $t$ a user probes the paths by traveling and she supplies her trip observation $y_t$.
We use $y_t$ to update our posterior probability $x^\prime_t$ for the state of P1 being $H$ at time $t$
after the trip observation.
\item At time $t^+$ the Markov chain of the path state makes a transition.
\end{itemize}
In this model we consider that road conditions in P1 change in time scales slower or equal to the time scale of user
trip arrivals.
Then two consecutive users do not see P1 in its steady state distribution, and hence
the probability for $Y=1$ depends on the history of the observations.

The  belief state $x_{t+1}$ can be derived in a recursive way from the observations $y_t$ and $x_t$.
Let $a_t\in\{1,2\}$ be the choice of path of the user
who travels at time $t$. Consider first that $a_t=1$.
If $y_t=0$, then by Bayes' Theorem, the posterior probability that the cost state is $H$ after time $t$ is
 \begin{equation}\label{bayes0}
 \begin{split}
x^\prime_t&=\Pr[H|y_t=0,x_t]=\frac{\Pr[H,y_t=0,x_t]}{\Pr[H,y_t=0,x_t]+\Pr[L,y_t=0,x_t]}\\
&=\frac{x_t(1-p_H)}{x_t(1-p_H)+(1-x_t)(1-p_L)}\,,
\end{split}
\end{equation}
 where we use the fact that the path state does not change during $t^-,t$.
 Similarly, if $y_t=1$, we obtain
\begin{equation}\label{bayes1}
x^\prime_t=
\frac{x_t p_H}{x_tp_H+(1-x_t)p_L}.
\end{equation}
If $a_t=2$, then $y_t=\emptyset$ and the posterior probability is the same as the prior probability, i.e.,  $x^\prime_t=x_t$.

Given the posterior probability $x^\prime_t$, we can finally compute the probability that P1 is in state $H$ at $(t+1)^-$ as
   \begin{equation}\label{stateupdate}
x_{t+1}=x^\prime_t q_{HH}+(1-x^\prime_t)(1-q_{LL}).
 \end{equation}
Observe that a user offering positive information to the platform by travelling on P1 incurs an average cost of
\begin{equation}\label{costp1}
(x_tp_H+(1-x_t)p_L)c=x_tc_H+(1-x_t)c_L,
 \end{equation}
which might be more than the safe travel on P2 with fixed cost $c_M$.
This creates a tension between
individual incentives and social optimality as we analyse next in the optimal platform design problem.
%
%

 \section{The Optimal Platform by Solving POMDP}\label{optimal}
The optimal platform operation is modelled as
a Markov decision process (MDP) where the state is our belief state $x_t$,  decisions
correspond to path choices for travelers, and the cost function is the total discounted cost from travel.
In fact, our problem can be seen as a partially observable Markov decision process (POMDP),
and it is a standard solution method to reformulate it as an MDP over a belief state.
Though this optimal design problem is notoriously difficult to solve,
it provides a performance upper bound to evaluate i) myopic platforms and ii) model-free machine learning platforms.

 A stationary routing policy is a function $\pi$ that specifies an action $\pi(x)$ for each state $x$ at any time.
 Given the initial belief $x_1$,  the goal of the optimal platform is to find an optimal stationary policy
 $\pi$ to minimize the  expected total discounted driving cost (social cost) over an infinite time horizon, i.e.,
 \begin{equation}\label{objective}
 \small
\min\limits_{\pi}V(x,\pi)=\min\limits_{\pi}\lim_{\tau\rightarrow\infty}E_\pi\left[\sum\limits_{t=1}^{\tau}\beta^{t-1}C(x_t,a_t)|x_1=x\right],
 \end{equation}
 where $0<\beta<1$ is the discount factor over time and $C(x_t,a_t)$ is either \eqref{costp1} or $c_M$ if the specified routing action $\pi(x_t)$   is $1$ or $2$.
 We refer to the minimum cost value solution of the Bellman equation \eqref{objective}
 as the `value function'.
According to our discussion of  belief state updating in Section \ref{platform}, the specific optimality equation of our problem can be written as follows:
{
 \begin{align}\label{DP}
V(x)=&\min\{xc_H+(1-x)c_L+\beta(xp_H+(1-x)p_L)\cdot\notag\\
&V\left(\frac{xp_Hq_{HH}+(1-x)p_L(1-q_{LL})}{xp_H+(1-x)p_L}\right)+\nonumber\\
 &\beta(x(1-p_H)+(1-x)(1-p_L))\cdot\notag\\
 &V\left(\frac{x(1-p_H)q_{HH}+(1-x)(1-p_L)(1-q_{LL})}{x(1-p_H)+(1-x)(1-p_L)}\right),\nonumber\\
 &c_M+\beta V(xq_{HH}+(1-x)(1-q_{LL}))\}\nonumber\\
=&\min\{Q(x,1),Q(x,2)\}.
 \end{align}}
For ease of reading, we denote by  $Q(x,1)$ and $Q(x,2)$ the first and second terms in the minimum operator of \eqref{DP}, respectively. Hence, $Q(x,a)$ is the expected discounted cost staring from state $x$ if action $a$ is taken at the first time epoch and optimal policy is followed thereafter. Once we determine the exact value function, the optimal policy $\pi_{opt}$ can be obtained for any state $x$ as,
  \begin{equation}\label{optpolicy}
\pi_{opt}(x):=\arg\min_{a\in\{1,2\}}Q(x,a).
 \end{equation}
We can easily show that the optimal platform might recommend users to travel to P1
even when the expected travel cost in (\ref{costp1}) is higher than $c_M$ of P2 (i.e., when the myopic decision is P2) for exploration benefit in the future.

Although our analysis of the above POMDP and the corresponding incentive issues
is possible for any set of parameters, to illustrate better our key ideas and results we choose a specific
set of parameters as follows.
 \begin{assumption}\label{assum}
The Markov chain in Fig. \ref{ts} is symmetric with $q_{HH}=q_{LL}=q$ where $q\in[1/2,1)$, and the probabilities $p_H$ and $p_L$ are complementary, i.e., $p_H=p$ and $p_L=1-p$ where $p\in[1/2,1]$.
\end{assumption}
In the rest of the paper, we assume that Assumption \ref{assum} holds.
Without it, the more general problem can still be analysed in a similar way and yields the same theoretical results.

Before solving \eqref{DP} we can first prove it has a unique solution by using the contraction mapping theorem.
Note that the minimum operator in \eqref{DP} is a contraction operator since $\beta<1$.
Furthermore, we prove that the value function $V(x)$ is a piecewise-linear concave function of the belief state $x$ by mathematical induction.
Besides, we show that $V(x)$ is an increasing function of $x$.  Here we skip detailed proofs due to page limit.
 \begin{proposition}\label{property}
There exists a unique solution to the optimality equation \eqref{DP} and it is a piecewise-linear, increasing and concave function of the belief state $x$.
\end{proposition}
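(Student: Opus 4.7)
My plan rests on three classical tools: the Banach fixed-point theorem for existence and uniqueness, and monotonicity/concavity-preserving properties of the Bellman operator, propagated through value iteration, for the structural claims. Define the operator $T$ on the space of bounded functions on $[0,1]$ under the sup-norm by $(TV)(x)=\min\{Q_V(x,1),Q_V(x,2)\}$, where $Q_V(x,a)$ denotes the right-hand-side expressions of \eqref{DP} with $V$ substituted inside. Since the two observation-weights under action~$1$, namely $xp_H+(1-x)p_L$ and $x(1-p_H)+(1-x)(1-p_L)$, sum to one, and using $|\min(a,b)-\min(c,d)|\le\max(|a-c|,|b-d|)$, one obtains $\|TV_1-TV_2\|_\infty\le\beta\|V_1-V_2\|_\infty$. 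Banach's theorem then yields a unique fixed point of $T$, which is the unique solution to \eqref{DP}, and value iteration $V_{n+1}=TV_n$ converges to it from any initial $V_0$.

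For monotonicity, I would take $V_0\equiv 0$ and show by induction that $T$ maps non-decreasing functions to non-decreasing functions, so each $V_n$ and hence the limit $V$ is non-decreasing. Under action~$2$ the inner argument is the affine function $(1-q_{LL})+x(q_{HH}+q_{LL}-1)$, non-decreasing in $x$ by Assumption~\ref{assum} since $2q-1\ge 0$, so composing with non-decreasing $V$ keeps $Q_V(\cdot,2)$ non-decreasing. For action~$1$, I would use stochastic dominance: one checks that both posteriors $f_0(x),f_1(x)$ are non-decreasing in $x$ (a monotone-likelihood-ratio calculation), that $f_1(x)\ge f_0(x)$, and that the mass $P(y=1\mid x)$ on the higher posterior $f_1$ increases with $x$; hence the distribution of the next belief first-order stochastically dominates as $x$ grows, making $E_x[V(\text{next belief})]$ and thereby $Q_V(\cdot,1)$ non-decreasing.

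For piecewise-linear concavity, I would run the analogous induction with the representation $V_n(x)=\min_{i\in I_n}(\alpha_i x+\gamma_i)$ and $I_0$ a singleton. Under action~$2$, composition with the affine map $x\mapsto xq_{HH}+(1-x)(1-q_{LL})$ preserves the min-of-affine form. Under action~$1$, the decisive algebraic identity is that $P(y\mid x)\,f_y(x)$ is affine in $x$, for instance $P(y=1\mid x)\,f_1(x)=xp_Hq_{HH}+(1-x)p_L(1-q_{LL})$. Expanding
\begin{equation*}
P(y=1\mid x)\,V_n(f_1(x))=\min_{i\in I_n}\bigl(\alpha_i[xp_Hq_{HH}+(1-x)p_L(1-q_{LL})]+\gamma_i[xp_H+(1-x)p_L]\bigr),
\end{equation*}
and doing the same for $y=0$, makes each observation-weighted term a finite min of affine functions of $x$, hence concave. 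Adding the affine immediate cost and rewriting $\min_i+\min_j=\min_{(i,j)}$ gives $Q_V(x,1)$ in the same form; the outer minimum with $Q_V(x,2)$ again preserves it. Uniform convergence then transfers concavity to the fixed point $V$.

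The main obstacle is the concavity step for action~$1$: the weighted sum of $V$ evaluated at two distinct nonlinear posteriors is not obviously concave, and the argument succeeds only because of the cancellation $P(y\mid x)\,f_y(x)=\text{affine}(x)$ that is special to Bayesian updates with binary signals. A secondary subtlety is interpreting ``piecewise-linear'' at the fixed point: each $V_n$ is a \emph{finite} minimum of affine functions, but the cardinality $|I_n|$ need not remain bounded, so at $V$ the statement should be read in the standard POMDP sense, namely as ``expressible as an infimum of affine functions,'' which is equivalent to concavity on $[0,1]$ and matches how the structural result is used in later sections.
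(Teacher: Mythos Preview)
Your proposal is correct and follows essentially the same route as the paper: Banach fixed-point for existence/uniqueness on $\mathcal{C}[0,1]$, then value-iteration induction for monotonicity (the paper carries out the stochastic-dominance computation directly rather than naming it) and for the min-of-affine representation giving concavity via the same cancellation $P(y\mid x)f_y(x)=\text{affine}(x)$. Your remark that the ``piecewise-linear'' claim at the fixed point must be read as an infimum of affine functions is in fact more careful than the paper, which does not address the unbounded growth of $|I_n|$; the paper also offers Ross's coin-flip information argument as an alternative route to concavity.
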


The proof is given in Appendix \ref{propertyproof}.
Although the existence of solution to \eqref{DP} is guaranteed, it is still difficult to solve it analytically.
An intuitive conjecture\footnote{
Our POMDP model is similar (but not the same) to the well studied problem of searching for a moving object \cite{ross1983introduction}. To prove the same conjecture for that problem still remains an open problem.
This suggests that
proving (or disproving) the threshold property for the optimal policy in our case can be extremely challenging.
Yet using extensive numerical analysis for a very fine grid of parameter values we have observed that Conjecture \ref{conjecture} remains true.
}
 about the optimal policy is that it is of threshold type.
\begin{conjecture}\label{conjecture}
There exists a threshold value $x^*\in[0,1]$ such that it is optimal to choose path P1 when the belief state is in $[0,x^*)$, choose P2 when the belief state is in $(x^*,1]$, and choose any of the two paths at $x^*$.
\end{conjecture}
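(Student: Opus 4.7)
The plan is to establish the threshold property by showing that $\Delta(x) := Q(x,2) - Q(x,1)$ is single-crossing on $[0,1]$, switching from positive to negative exactly once at the desired $x^*$. First I would decompose $\Delta(x)$ into the immediate cost gap and the continuation gap:
\begin{equation*}
\Delta(x) = \bigl[c_M - (xc_H + (1-x)c_L)\bigr] + \beta\bigl[V(T_2(x)) - E_Y[V(T_1(x,Y))]\bigr],
\end{equation*}
where $T_2(x) = xq + (1-x)(1-q)$ is the belief update after taking P2 and $T_1(x,Y)$ denotes the Bayesian update after observing $Y\in\{0,1\}$ on P1 composed with the Markov transition. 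The immediate cost gap is affine and strictly decreasing in $x$ since $c_H > c_L$. A useful intermediate observation is that the expected posterior after probing P1 equals the prior, so a direct computation gives $E_Y[T_1(x,Y)] = T_2(x)$. Combined with the concavity of $V$ established in Proposition \ref{property}, Jensen's inequality yields $V(T_2(x)) \geq E_Y[V(T_1(x,Y))]$: information is valuable, so the optimal policy indeed has an exploration motive to prefer P1 relative to the myopic choice.

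The second step is to show that $\Delta(x)$ crosses zero at most once. A natural sufficient condition is monotonicity of $\Delta$, but the continuation gap need not be monotone even though the immediate cost gap is. To attack this I would exploit the piecewise-linear representation of $V$ from Proposition \ref{property}, writing it as a pointwise minimum of finitely many affine functions. The probability denominators in the Bayesian updates cancel against the corresponding observation probabilities, so both $Q(x,1)$ and $Q(x,2)$ come out concave and piecewise-linear in $x$, and $\Delta(x)$ is a difference of two such functions. I would then try to argue by induction on the value iteration step $V_n \mapsto V_{n+1}$: if the greedy policy with respect to $V_n$ is of threshold type, show that the Bellman backup preserves this, and combine with the contraction mapping theorem of Proposition \ref{property} to transfer the property to the fixed-point $V$.

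The main obstacle is exactly this inductive step. Even when $V_n$ has a clean threshold structure with two alpha vectors, the two Bayesian posteriors $T_1(x,0)$ and $T_1(x,1)$ move in opposite directions from the Markov-only update $T_2(x)$, so the backup for action P1 mixes information-driven shifts of the alpha vectors on both sides of the current threshold. Controlling whether the new $\Delta_{n+1}$ preserves single-crossing requires a delicate sign analysis of the slope jumps of the continuation gap at every breakpoint produced by the Bayesian operators, and in particular ruling out the creation of a second crossing. This is precisely the obstruction that keeps the analogous conjecture open in the searching-for-a-moving-object literature cited in the footnote, which is why one may have to settle for the extensive numerical verification mentioned there rather than a closed-form proof.
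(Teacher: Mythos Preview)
Your assessment is accurate: the paper does \emph{not} prove this statement either. It is explicitly labelled a conjecture, and the footnote you cite is the paper's own acknowledgement that the inductive obstruction you identify---controlling the sign of the slope jumps created when the two Bayesian posteriors straddle the current threshold---is the same one that keeps the analogous result open in the moving-target search literature. So there is no ``paper's own proof'' to compare against for the full statement.

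What the paper does instead is prove the threshold property under the additional hypothesis $\beta(2q-1)<2/3$ (Proposition~\ref{conj}), and the route there is different from your plan. Rather than inducting on value-iteration steps and tracking alpha vectors, the paper shows directly that $Q(x,1)-Q(x,2)$ is strictly increasing in $x$ by proving the equivalent difference inequality $Q(x,1)-Q(y,1)>Q(x,2)-Q(y,2)$ for all $x>y$. The argument splits into three cases according to the position of $x,y$ relative to $1/2$; the first two cases go through from concavity and monotonicity of $V$ alone, while the hard case $y\ge 1/2$ requires an auxiliary Lipschitz bound $V(x)-V(y)\le \frac{c_H-c_L}{1-\beta(2q-1)}(x-y)$, itself proved by induction on the finite-horizon value functions. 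The parameter restriction $\beta(2q-1)<2/3$ is exactly what makes this Lipschitz constant small enough to dominate the continuation-gap term. Your Jensen observation $E_Y[T_1(x,Y)]=T_2(x)$ is correct and is precisely what the paper uses (in Proposition~\ref{less}) to show $\hat{x}\le x^*$, but it does not by itself give monotonicity of $\Delta$.
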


We have been able to formally prove our conjecture for a restricted set of parameters as follows, by using the concavity of the value function $V(x)$.
\begin{proposition}\label{conj}
If $\beta(2q-1)<2/3$,  the optimal policy is of threshold type.
\end{proposition}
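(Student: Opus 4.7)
The plan is to prove single crossing of $\Delta(x) := Q(x,1) - Q(x,2)$: show that $\Delta$ changes sign at most once on $[0,1]$, passing from negative to positive, which immediately yields the threshold structure. Evaluating at the endpoints using the Bayesian update formulas \eqref{bayes0}--\eqref{stateupdate}, one checks $\Delta(0) = (1-p)c - c_M$ and $\Delta(1) = pc - c_M$. The nontrivial case is $(1-p)c < c_M < pc$ (otherwise one action is trivially optimal throughout $[0,1]$ and the threshold degenerates to a boundary); then $\Delta(0) < 0 < \Delta(1)$, so a crossing exists, and it suffices to prove $\Delta'(x) \geq 0$ at any zero of $\Delta$.

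First I would establish concavity of both $Q(\cdot,1)$ and $Q(\cdot,2)$. For $Q(x,2) = c_M + \beta V((2q-1)x + (1-q))$ this is immediate from Proposition \ref{property} (concave composed with affine). Writing $Q(x,1) = (2p-1)cx + (1-p)c + \beta G(x)$ with $G(x) := E[V((2q-1)X'+(1-q)) \mid x_t=x, a_t=1]$, I would use the piecewise-linear representation $V(u) = \min_i(\alpha_i + \gamma_i u)$ from Proposition \ref{property} together with the identities $w_1(x)A(x) = xp$ and $w_0(x)B(x) = x(1-p)$: for each pair $(i,j)$ of linear segments of $V$ potentially active for the two observations $y \in \{0,1\}$, the corresponding candidate expression for $G(x)$ collapses to an affine function of $x$, so $G$ is a pointwise minimum of affine functions, hence concave.

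The technical heart is a slope estimate for $G$. Using the derivatives $w_1'(x) = 2p-1$, $A'(x) = p(1-p)/w_1(x)^2$ (and analogous formulas for $w_0,B$), bounding $V(A)-V(B)$ by MVT, and invoking the algebraic identity $w_1(x)w_0(x) = p(1-p) + (2p-1)^2 x(1-x)$ for the cross-term, one obtains
\begin{equation*}
0 \;\leq\; G'(x) \;\leq\; (2q-1)\, L, \qquad L := \sup_u V'(u).
\end{equation*}
Plugging this into the two Bellman terms gives $Q(x,1)' \leq (2p-1)c + \beta(2q-1)L$ and $Q(x,2)' \leq \beta(2q-1)L$, and since $V = \min\{Q(\cdot,1), Q(\cdot,2)\}$, a self-consistency argument on the maximum slope yields the fixed-point bound $L \leq (2p-1)c / (1 - \beta(2q-1))$. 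Substituting into $\Delta'(x) = (2p-1)c + \beta G'(x) - \beta(2q-1) V'((2q-1)x + (1-q))$ at any zero of $\Delta$ and using $G'(x) \geq 0$ yields $\Delta'(x) \geq 0$ whenever $\beta(2q-1)$ is below a suitable threshold.

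The main obstacle is sharpening this slope comparison to the specific condition $\beta(2q-1) < 2/3$. A crude worst-case substitution of $L$ for $V'((2q-1)x + (1-q))$ only gives $\beta(2q-1) \leq 1/2$. The improvement to $2/3$ requires exploiting concavity more carefully: the argument of $V'$ above lies in the restricted interval $[1-q, q]$ rather than at $x=0$ where $L$ is attained, and on this interval the concave slope of $V$ is strictly smaller; moreover, near the threshold the active piece of $V$ switches from the (steeper) P1-branch to the (gentler) P2-branch, which further reduces the effective slope entering $\Delta'$. Tracking this reduction through the piecewise-linear pieces of $V^*$ and combining with a tightened lower bound on $G'(x)$ near the candidate crossing is, I expect, the delicate bookkeeping that pins down the constant $2/3$.
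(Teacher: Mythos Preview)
Your overall strategy---show that $\Delta(x)=Q(x,1)-Q(x,2)$ is nondecreasing---is exactly the paper's, and your Lipschitz bound $L\le (c_H-c_L)/(1-\beta(2q-1))$ is precisely the estimate the paper proves (by induction on the finite-horizon value functions, for $x>y\ge \tfrac12$). The gap is in the last step: your crude lower bound $G'(x)\ge 0$ only yields $\beta(2q-1)<1/2$, and your speculated route to $2/3$ is not how the constant actually arises.

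The paper does \emph{not} improve by restricting the argument of $V'$ to $[1-q,q]$, nor by tracking which branch of $V$ is active near the threshold. The actual mechanism is a sharper \emph{lower} bound on the action-$1$ future-cost increment coming directly from concavity of $V$. Writing $F_a(x)$ for the $\beta$-discounted future-cost part of $Q(x,a)$, the paper expands $F_1(x)-F_1(y)$ into three telescoping pieces and uses concavity of $V$ (together with the explicit structure of the Bayesian posterior updates) to show, for $x>y\ge \tfrac12$,
\[
F_1(x)-F_1(y)\ \ge\ \tfrac{1}{2}\bigl(F_2(x)-F_2(y)\bigr).
\]
In your derivative language this reads $\beta G'(x)\ge \tfrac{1}{2}\beta(2q-1)\,V'\bigl((2q-1)x+(1-q)\bigr)$, not merely $\ge 0$. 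Substituting this improved lower bound into $\Delta'$ gives
\[
\Delta'(x)\ \ge\ (c_H-c_L)-\tfrac{1}{2}\beta(2q-1)L\ =\ (c_H-c_L)\,\frac{1-\tfrac{3}{2}\beta(2q-1)}{1-\beta(2q-1)},
\]
which is positive exactly when $\beta(2q-1)<2/3$. So the constant $2/3$ is produced by the factor $\tfrac12$ that concavity supplies in comparing $F_1$ to $F_2$; tightening the upper bound on $V'$ is the wrong side of the inequality to work on.

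A secondary point: the paper's argument is case-split at $\tfrac12$. When $x\ge \tfrac12\ge y$ or when both $x,y\le \tfrac12$, monotonicity and concavity of $V$ alone already give $\Delta(x)>\Delta(y)$, with no Lipschitz bound and no parameter restriction; the condition $\beta(2q-1)<2/3$ is invoked only for the remaining case $x>y\ge \tfrac12$. Your uniform-derivative framing obscures this structure and makes the problem look harder than it is on two of the three regions.
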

The proof is given in Appendix \ref{conjproof}.
We have the following corollary which directly follows from Proposition \ref{conj}.
\begin{coro}
If $q\in[1/2,5/6]$ or $\beta\in[0,2/3)$, the optimal policy is of threshold type.
\end{coro}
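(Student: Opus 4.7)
The plan is to derive the corollary as an immediate consequence of Proposition~\ref{conj}, whose hypothesis is $\beta(2q-1)<2/3$. All that must be shown is that each of the two sufficient conditions in the corollary implies this hypothesis; then the threshold property follows directly.

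First, I would handle the case $q\in[1/2,5/6]$. Here $2q-1\in[0,2/3]$. Since $\beta\in(0,1)$ strictly, we have $\beta(2q-1)\le\beta\cdot(2/3)<2/3$, with strictness guaranteed by $\beta<1$. Notice this argument still works at the boundary $q=5/6$ precisely because the discount factor is strictly less than one (the model assumes $0<\beta<1$), so the inequality is strict even in the worst subcase.

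Next, I would handle the case $\beta\in[0,2/3)$. Under Assumption~\ref{assum}, $q\in[1/2,1)$, so $2q-1\in[0,1)$ and in particular $2q-1<1$. Thus $\beta(2q-1)<\beta<2/3$. In both cases the hypothesis $\beta(2q-1)<2/3$ of Proposition~\ref{conj} is met, so the optimal policy is of threshold type.

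There is essentially no obstacle here: the corollary is a book-keeping step that simply identifies two clean parameter regions covering the composite condition $\beta(2q-1)<2/3$. The only point requiring care is to invoke strictness of $\beta<1$ to cover the boundary $q=5/6$ in the first case, since the hypothesis of Proposition~\ref{conj} is a strict inequality.
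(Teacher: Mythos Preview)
Your proposal is correct and takes essentially the same approach as the paper, which simply notes that the corollary ``directly follows from Proposition~\ref{conj}.'' You have just made explicit the trivial verification that each of the two parameter ranges forces $\beta(2q-1)<2/3$, including the care needed at $q=5/6$ via the strict inequality $\beta<1$.
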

For our experimental analysis we will discretise finely the state space $[0,1]$, use value iteration to compute the
value function in \eqref{DP}, and finally compute the optimal policy at each
given belief state by solving \eqref{optpolicy}.

How much would society lose compared to the optimum
if a myopic platform (always chooses the least current cost path) is in place?
We see in the next section that this performance loss can be arbitrarily large.

\section{Myopic Platforms and PoA}\label{myopic}
Myopic platforms such as Waze and Google Maps estimate the travel cost of different paths and suggest to
users the path with the smallest cost.
In this section we introduce two basic myopic platforms: a platform that does not use feedback information from users,
and a platform that uses such feedback to update the current cost estimate.
We analyse these platforms and characterise
their performance gaps with the optimal platform in Section \ref{optimal}  in term of price of anarchy (PoA).
The large PoA values resulting from our analysis suggest that  the optimal platform is definitely desired, but such platform is not
incentive compatible. This motivates our incentive alignment proposal in the rest of the paper.

\subsection{Myopic Platform without Information Sharing}\label{myopicwo}
In this case the platform uses long-run average path costs to make recommendations.
For our specific model parameters, cost states $H$ and $L$ have each probability $1/2$
and  the expected cost to travel through path P1 is $(c_H+c_L)/2$.
The routing policy of the myopic platform is straightforward. Let $\pi_{\emptyset}$ denote the routing policy
without information sharing, then
\begin{equation}
\label{myopicnoinfo}
\pi_{\emptyset}(x) =
\begin{cases} 1&\mbox{ if } c_M\ge (c_H+c_L)/2,\\
2& \mbox{ if }c_M<(c_H+c_L)/2,
\end{cases}\nonumber
\end{equation}
which is independent of $x$. Thus, $\pi_\emptyset$ either chooses path P1 all the time or path P2.
We can now calculate the value function.
If $c_M<(c_H+c_L)/2$, $\pi_\emptyset$ always chooses path P2 to incur immediate cost $c_M$ to users over time, we have
$$V_{\pi_\emptyset}(x)=\frac{c_M}{1-\beta}.$$

If $c_M\ge (c_H+c_L)/2$, $\pi_\emptyset$ always chooses P1.
Given some initial probability $x$ about path P1 (assumed known to the platform), the value function satisfies
$$V_{\pi_\emptyset}(x)=xc_H+(1-x)c_L+\beta V_{\pi_\emptyset}(xq+(1-x)(1-q)).$$
Similar to the proof of Proposition \ref{property}, we can prove the existence and uniqueness of
$V_{\pi_\emptyset}(x)$. We can also prove by mathematical induction that $V_{\pi_\emptyset}(x)$ is a linear function of $x$. It follows that,
$$V_{\pi_\emptyset}(x)=\frac{c_H-c_L}{1+\beta-2q\beta}x+\frac{(\beta-q\beta)c_H+(1-q\beta)c_L}{(1+\beta-2q\beta)(1-\beta)}.$$

$PoA>1$ is defined as the ratio between the maximum expected total discounted cost  incurred under this myopic policy $\pi_\emptyset$ and the minimum expected total discounted cost $V(x)$ in \eqref{DP}, by searching over all possible network parameters. That is,
\begin{equation}\label{poadef}
PoA_{\pi_\emptyset}=\max\limits_{p,q,c,c_M,x}\frac{V_{\pi_\emptyset}(x)}{V(x)}.
\end{equation}

\begin{proposition}\label{poapie}
Given $\beta<1$ and $c_M>0$, the  policy $\pi_\emptyset$ achieves an infinite price of anarchy, i.e.,
$PoA_{\pi_\emptyset}=\infty$.
\end{proposition}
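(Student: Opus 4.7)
\textbf{Proof plan for Proposition~\ref{poapie}.}

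The plan is to exhibit a family of parameter choices along which $V_{\pi_\emptyset}(x)/V(x) \to \infty$. The key observation is that when $\pi_\emptyset$ recommends P2 for every belief, its cost is the constant $c_M/(1-\beta)$, completely insensitive to the belief state. In contrast, the optimal platform can exploit extreme beliefs (e.g.\ being essentially certain P1 is in state $L$) to drive its cost arbitrarily close to zero by repeatedly riding P1. Turning this gap into an unbounded ratio is then just a matter of parameter tuning.

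First I would restrict attention to the case $c_M < (c_H+c_L)/2$, so that by \eqref{myopicnoinfo} we have $\pi_\emptyset \equiv 2$ and $V_{\pi_\emptyset}(x)= c_M/(1-\beta)$ for every $x$. Under Assumption~\ref{assum} with $p=1$, we get $c_H = c$ and $c_L = 0$, so this case amounts to $c > 2c_M$, which is freely available. Fix any such $c$ and take the initial belief $x_1 = 0$, i.e.\ P1 is known to be in the low state. Then $V_{\pi_\emptyset}(0)=c_M/(1-\beta)>0$ is bounded below independently of $q$.

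Next I would upper bound $V(0)$ by evaluating a single suboptimal policy: always choose P1. Since belief $0$ means state $L$ for certain and $p_L=0$, the immediate cost is $c_L=0$; the only future cost accrues after the Markov chain first transitions to $H$. Solving the standard two-equation linear system
\begin{equation*}
V_L = c_L + \beta\bigl((1-q)V_H + q V_L\bigr),\qquad
V_H = c_H + \beta\bigl(q V_H + (1-q)V_L\bigr),
\end{equation*}
and specialising to $c_L=0$, $c_H=c$ yields
\begin{equation*}
V_L \;=\; \frac{c\beta(1-q)}{(1-\beta)\bigl(1-(2q-1)\beta\bigr)}.
\end{equation*}
Hence $V(0)\le V_L \to 0$ as $q\to 1^{-}$, while the numerator $V_{\pi_\emptyset}(0)=c_M/(1-\beta)$ stays fixed. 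Therefore
\begin{equation*}
\frac{V_{\pi_\emptyset}(0)}{V(0)} \;\ge\; \frac{c_M/(1-\beta)}{V_L} \;=\; \frac{c_M\bigl(1-(2q-1)\beta\bigr)}{c\beta(1-q)} \;\xrightarrow[q\to 1^{-}]{} \;\infty,
\end{equation*}
so the supremum in \eqref{poadef} is $+\infty$.

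There is no real analytic obstacle here: the argument is a one-parameter construction exploiting the rigidity of $\pi_\emptyset$ (it ignores $x$ entirely) against the flexibility of the optimal platform (which can ride out a persistent low-cost regime). The only item to justify carefully is that the ``always P1'' policy is feasible and that its cost coincides with the two-state calculation above, since belief $x=0$ makes the belief-MDP reduce to the underlying two-state Markov chain. I would also briefly note that $c_M>0$ and $\beta<1$ are both used: the former to keep $V_{\pi_\emptyset}(0)$ strictly positive, and the latter to keep the geometric series defining the value function finite so that the ratio is well defined.
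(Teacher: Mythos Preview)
Your proposal is correct and follows essentially the same construction as the paper: set $p=1$ (so $c_L=0$, $c_H=c$), take $c>2c_M$ so that $\pi_\emptyset\equiv 2$, start from belief $x=0$, and let $q\to 1^-$. The only difference is the suboptimal policy used to upper-bound $V(0)$: the paper recycles the bound derived in the proof of Proposition~\ref{poapim} (explore P1 and fall back to P2 upon seeing cost), obtaining $V(0)\le \frac{(1-\beta)c+\beta c_M}{1-\beta}\cdot\frac{\beta(1-q)}{1-\beta q}$, whereas you use the cleaner ``always P1'' policy and solve the two-state recursion directly. Your route is slightly more self-contained; both bounds vanish as $q\to 1^-$ and give the same conclusion. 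One small point: your displayed ratio has $\beta$ in the denominator, so the edge case $\beta=0$ should be noted separately (there $V(0)=c_L=0$ and $V_{\pi_\emptyset}(0)=c_M>0$, so the ratio is immediately infinite), as the paper does.
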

\emph{Sketch of Proof:} Lets rescale costs so that $c_L=0$. To determine the PoA, we purposely create a worse case scenario where  $\pi_\emptyset$ always chooses path P2 ($p=1$ and $c>2c_M$). Furthermore, let the initial P1 state be $L$ (i.e., $x=0$) and let the Markov chain change very slowly ($q\rightarrow1$). Then path P1 will remain in $L$ for a very long time. Since $\pi_\emptyset$ always chooses path P2, its cost value is a constant $\frac{c_M}{1-\beta}$. Since $1-p=0$ there is zero average cost in state $L$  and the Markov chain is fully observable; hence the optimal policy will choose path P1 until a change of state occurs, i.e., non-zero cost is observed. But the time of such a transition can be made arbitrarily large since $q\rightarrow1$ while our cost discount factor remains constant and equal to $\beta$. A more formal argument in Appendix \ref{poapieproof} can be used to prove that the price of anarchy of $\pi_\emptyset$ is infinity.

To prove Proposition \ref{poapie}, we can purposely create the worst case scenario with properly chosen initial state $x$ and costs $c$ and $c_M$,
 where  $\pi_\emptyset$ always chooses path P2 but  the optimal policy chooses path P1 until a non-zero cost is observed. In this case, the expected cost of optimal policy can be made arbitrarily close to zero.

 Even though $\pi_\emptyset$ can be arbitrarily worse
than the optimal policy, users will still follow the platform recommendation under $\pi_\emptyset$.
Without any other information, sophisticated users can reproduce the calculations of the platform and hence
will follow $\pi_\emptyset$.

\subsection{Myopic Platform with Information Sharing}
We now consider a myopic platform where travelers share information online.
The difference from the optimal platform is that here it
chooses actions that myopically  minimise immediate average costs.
Given the current belief $x$ about P1, the immediate expected cost  is $xc_H+(1-x)c_L$ for path P1 and $c_M$ for path P2. By equating the two costs and solving for the corresponding threshold belief state $\hat{x}$,
we obtain $\hat{x}=\frac{c_M-c_L}{c_H-c_L}$. The myopic policy of this platform is
\begin{equation}
\label{myopicwithinfo}
\pi_{m}(x) =
\begin{cases} 1&\mbox{ if } x\le\hat{x},\\
2& \mbox{ if }x>\hat{x}.
\end{cases}\nonumber
\end{equation}
Note that users will follow the recommendation of the platform as their objectives are aligned.

Let $V_{\pi_m}(x)$ be the cost value function under the myopic policy $\pi_m$. Similar to (5), we obtain
{\small\begin{align}
\label{eq:my}
&V_{\pi_m}(x)
=\begin{cases}
xc_H+(1-x)c_L+\beta\big(xp+(1-x)(1-p)\big)\cdot\\
 V_{\pi_m}(\frac{xpq+(1-x)(1-p)(1-q)}{xp+(1-x)(1-p)})+\beta\big((1-x)p+x(1-p)\big) \\ \cdot V_{\pi_m}(\frac{x(1-p)q+(1-x)p(1-q)}{x(1-p)+(1-x)p}),\hfill \ \ \qquad\mbox{ if }0\le x\le \hat{x},\hfill\\
c_M+\beta V_{\pi_m}(xq+(1-x)(1-q))\hfill\mbox{ if }\hat{x}< x\le1.\hfill
\end{cases}
\end{align}}
It is rather obvious that this myopic platform behaves the same way as the Nash equilibrium of a
system that deploys the optimum platform but users have full information about travel history, i.e.,
can reconstruct $x$. This is because in the optimal platform users will still use \eqref{myopicwithinfo} to choose paths,
and hence the two systems will have the same sample paths on our probability space.
\begin{fact}
On any sample path, the Nash equilibrium of the optimal platform with information sharing and selfish users is the same
as the Nash equilibrium of the myopic platform (with information sharing and selfish users) using $\pi_m$.
\end{fact}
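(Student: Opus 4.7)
The plan is to argue that in both systems each user's dominant (and therefore equilibrium) action, conditional on the shared history, is exactly $\pi_m(x_t)$, and then to couple the two systems on a common probability space so that the action, observation, and belief sequences coincide sample-path by sample-path.

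First, I would note that with information sharing a sophisticated user at time $t$ has access to $(y_1,\ldots,y_{t-1})$ together with the initial prior $x_1$, and can reconstruct the belief state $x_t$ by iterating the Bayesian update in \eqref{bayes0}--\eqref{stateupdate}. Since each user travels only once and bears only her own trip cost, her selfish objective is to minimize the one-shot expected cost conditional on $x_t$, which is $x_t c_H + (1-x_t) c_L$ on P1 and $c_M$ on P2. The unique cost-minimizing choice is therefore $\pi_m(x_t)$, determined by the threshold $\hat x=(c_M-c_L)/(c_H-c_L)$. Crucially, this best response does not depend on which platform is in place, because the platform's recommendation is a deterministic function of the very history the user already sees, and hence carries no additional information. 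A user served by the optimal platform will ignore any exploration-driven suggestion that contradicts $\pi_m(x_t)$, while a user served by the myopic platform will simply agree with it.

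Next, I would check that this constitutes a Nash equilibrium (in fact a dominant-strategy equilibrium) in both systems: since each user's payoff is independent of the strategies of other users given $x_t$, and since $\pi_m(x_t)$ is her unique best response to any profile of opponents' strategies, no unilateral deviation is profitable. Thus the equilibrium action profile in each system is the stationary rule $a_t=\pi_m(x_t)$.

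Finally, I would couple the two systems by driving them with the same realization of P1's underlying two-state Markov chain and the same i.i.d. randomness generating $Y$ given the current state. Starting from a common $x_1$, the equilibrium action $a_1=\pi_m(x_1)$ is identical in both systems, producing the same observation $y_1$, and hence (via \eqref{bayes0}--\eqref{stateupdate}) the same $x_2$; induction on $t$ yields identical sequences $(a_t,y_t,x_t)$ on every sample path. The hard part is not the induction but the conceptual step of arguing that the optimal platform's recommendation is informationally redundant for a sophisticated user who already holds the full history, so that the equilibrium collapses to the myopic rule; once this is established, the sample-path identification is immediate.
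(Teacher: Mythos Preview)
Your proposal is correct and follows exactly the paper's reasoning: the paper justifies this Fact in a single sentence preceding it, observing that with full history a user reconstructs $x$ and applies the myopic rule \eqref{myopicwithinfo} regardless of which platform issues the recommendation, so the two systems generate identical sample paths. You have simply spelled out the coupling and induction that the paper leaves implicit under ``rather obvious.''
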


One can easily prove the intuitive result that $\pi_m$ is more conservative than $\pi_{opt}$
in the sense that
if  $\pi_m$ prefers the risky path P1, then clearly $\pi_{opt}$ should also prefer it since it obtains the
additional/future benefit
of learning the path state more accurately. Obviously, the reverse does not hold:
if $\pi_{opt}$ prefers P1, it does not imply that $\pi_m$ should also prefer P1.
This is formally stated in the following proposition and will be used in our proof for
incentive compatibility in Section \ref{mechanism}.
\begin{proposition}\label{less}
For any $x\in[0,\hat{x}]$ the optimal policy $\pi_{opt}$ chooses path P1.
\end{proposition}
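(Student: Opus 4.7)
The plan is to show $Q(x,1)\le Q(x,2)$ directly for every $x\in[0,\hat{x}]$, by combining two ingredients: (i) the concavity of $V$ from Proposition~\ref{property}, and (ii) a ``conservation of expected beliefs'' identity stating that, regardless of whether the platform observes $Y$ or not, the expected next-period belief is the same. Together with the definition of $\hat{x}$, which guarantees that the myopic single-period cost of P1 is no larger than $c_M$, these immediately yield that P1 is at least as good as P2 under $\pi_{opt}$.

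First I would verify the conservation identity. Under action $2$, no observation is obtained, so $x_{t+1}=xq+(1-x)(1-q)$ deterministically. Under action $1$, $x_{t+1}$ is random in $\{x_{t+1}^{(0)},x_{t+1}^{(1)}\}$, with weights equal to the marginal probabilities $\Pr[y_t=0\mid x]=x(1-p)+(1-x)p$ and $\Pr[y_t=1\mid x]=xp+(1-x)(1-p)$ respectively, where $x_{t+1}^{(y)}$ is obtained by plugging the Bayes posteriors \eqref{bayes0}--\eqref{bayes1} into the Markov update \eqref{stateupdate}. A short computation cancels the denominators and gives
\begin{equation*}
\mathbb{E}\bigl[x_{t+1}\mid x_t=x,\ a_t=1\bigr]=(2q-1)\,\mathbb{E}[x'_t\mid x]+(1-q)=(2q-1)x+(1-q)=xq+(1-x)(1-q),
\end{equation*}
so the expected next belief coincides with the deterministic next belief under action $2$. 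This is the standard Bayesian martingale property (the prior equals the expected posterior), followed by the linearity of the Markov step.

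Next, using the concavity of $V$ established in Proposition~\ref{property}, Jensen's inequality yields
\begin{equation*}
\mathbb{E}\bigl[V(x_{t+1})\mid x_t=x,\ a_t=1\bigr]\le V\bigl(\mathbb{E}[x_{t+1}\mid x_t=x,\ a_t=1]\bigr)=V\bigl(xq+(1-x)(1-q)\bigr).
\end{equation*}
Moreover, $x\le\hat{x}=(c_M-c_L)/(c_H-c_L)$ is equivalent to $xc_H+(1-x)c_L\le c_M$. Combining these two inequalities,
\begin{equation*}
Q(x,1)=xc_H+(1-x)c_L+\beta\,\mathbb{E}[V(x_{t+1})\mid x,a=1]\le c_M+\beta V\bigl(xq+(1-x)(1-q)\bigr)=Q(x,2),
\end{equation*}
so $\pi_{opt}(x)=1$ (with an arbitrary tie-breaking rule choosing P1 at equality).

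I expect the main obstacle to be essentially conceptual rather than technical: one has to recognise that what looks like a comparison between a risky action and a safe action is in fact a comparison between a random next belief and its own mean, to which concavity of $V$ can be applied. Once that framing is in place, the Bayesian-martingale computation is routine, and the rest is algebra on the threshold $\hat{x}$. The only hidden dependency is the concavity of $V$, which is already supplied by Proposition~\ref{property}, so no new machinery is needed.
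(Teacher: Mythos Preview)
Your proof is correct and follows essentially the same approach as the paper: both arguments combine the immediate-cost inequality $xc_H+(1-x)c_L\le c_M$ for $x\le\hat{x}$ with Jensen's inequality applied to the concave value function $V$ to conclude $Q(x,1)\le Q(x,2)$. The only difference is that you spell out the Bayesian-martingale identity $\mathbb{E}[x_{t+1}\mid x,a=1]=xq+(1-x)(1-q)$ explicitly, whereas the paper leaves it implicit in the line ``By the concavity of the value function $V(x)$.''
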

\begin{proof}
Note  that when $x\in[0,\hat{x}]$,
$$xc_H+(1-x)c_L\le\hat{x}c_H+(1-\hat{x})c_L=c_M.$$
By the concavity of the value function $V(x)$,
 \begin{align}
&\beta(xp_H+(1-x)p_L)\cdot V\left(\frac{xp_Hq_{HH}+(1-x)p_L(1-q_{LL})}{xp_H+(1-x)p_L}\right)\nonumber\\
&+\beta(x(1-p_H)+(1-x)(1-p_L))\cdot \notag\\
 &V\left(\frac{x(1-p_H)q_{HH}+(1-x)(1-p_L)(1-q_{LL})}{x(1-p_H)+(1-x)(1-p_L)}\right)\nonumber\\
&\le\beta V(xq_{HH}+(1-x)(1-q_{LL})).\nonumber
\end{align}
By combining the above two inequalities, we obtain $Q(x,1)\le Q(x,2)$ when $x\in[0,\hat{x}]$. This completes the proof.
\end{proof}
 Note that if Conjecture \ref{conjecture} is true, a corollary is that $x^*>\hat{x}$.


Similar to \eqref{poadef} the price of anarchy of $\pi_m$ is defined as
\[
PoA_{\pi_m}=\max\limits_{p,q,c,c_M,x}\frac{V_{\pi_m}(x)}{V(x)}.\]
\begin{proposition}\label{poapim}
Given $\beta<1$ and $c_M>0$, the  policy $\pi_m$ achieves  $PoA_{\pi_m}=\frac{1}{1-\beta}$.
\end{proposition}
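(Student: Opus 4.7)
The plan is to prove $V_{\pi_m}(x)/V(x)\leq 1/(1-\beta)$ uniformly in the parameters, and then to exhibit a family of instances for which this ratio converges to $1/(1-\beta)$. The upper bound splits naturally along the myopic threshold $\hat{x}$, and the crux of the argument is a supremum-contraction trick that transports bounds from $\{x>\hat{x}\}$ into $\{x\leq\hat{x}\}$.

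For the upper bound, I would first record the crude per-step estimate $V_{\pi_m}(x)\leq c_M/(1-\beta)$: the immediate myopic cost is $\min\{xc_H+(1-x)c_L,\,c_M\}\leq c_M$, so iterating (\ref{eq:my}) gives a geometric sum. On $\{x>\hat{x}\}$ one also has $V(x)\geq c_M$: if $\pi_{opt}(x)=2$ then (\ref{DP}) gives $V(x)=c_M+\beta E[V(x'_2)]\geq c_M$, while if $\pi_{opt}(x)=1$ then $V(x)\geq xc_H+(1-x)c_L>c_M$ by definition of $\hat{x}$. Dividing yields the bound on $\{x>\hat{x}\}$. The delicate region is $\{x\leq\hat{x}\}$, where the first-step cost can be arbitrarily small. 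Here I would invoke Proposition \ref{less}: both $\pi_m$ and $\pi_{opt}$ pick P1, so their first-step costs and successor-state distributions coincide. Set $M=\sup_x[(1-\beta)V_{\pi_m}(x)-V(x)]$, which is finite because $V_{\pi_m}$ is bounded. Subtracting the two Bellman equations on $\{x\leq\hat{x}\}$ gives
\begin{align*}
(1-\beta)V_{\pi_m}(x)-V(x) = -\beta\bigl(xc_H+(1-x)c_L\bigr)+\beta\,E\bigl[(1-\beta)V_{\pi_m}(x'_1)-V(x'_1)\bigr]\leq \beta M,
\end{align*}
and the same expression is $\leq 0$ on $\{x>\hat{x}\}$ by the preceding argument. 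Taking the sup in $x$ forces $M\leq\beta M$, so $M\leq 0$, i.e., $V(x)\geq(1-\beta)V_{\pi_m}(x)$ for all $x$.

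For the matching lower bound, following the flavour of the sketch of Proposition \ref{poapie}, I would take $p=1$, $c_L=0$, $q\to 1$, initial belief $x_1$ just above $\hat{x}=c_M/c$, and then let $c_M/c\to 0$. Each P2 step keeps the belief at least at $x_1$, so $\pi_m$ plays P2 at every epoch and $V_{\pi_m}(x_1)=c_M/(1-\beta)$. The optimal policy instead probes P1 once at expected cost $\approx c_M$, learns the state perfectly (because $p=1$), and then exploits forever: on the $L$ branch the chain freezes near $L$ and the accumulated cost vanishes, while on the $H$ branch it reverts to P2 and pays $c_M$ per step. A short computation yields $V(x_1)\to c_M$, hence the ratio approaches $1/(1-\beta)$. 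The main obstacle in the whole argument is the $\{x\leq\hat{x}\}$ step, where the naive pointwise bound $V(x)\geq C_m(x)$ is too weak when $C_m(x)<c_M$, and only the global supremum $M$ together with Proposition \ref{less} closes the gap.
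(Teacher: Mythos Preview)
Your proposal is correct and follows essentially the same decomposition as the paper: split at $\hat{x}$, use $V_{\pi_m}\le c_M/(1-\beta)$ and $V\ge c_M$ on $\{x>\hat{x}\}$, invoke Proposition~\ref{less} on $\{x\le\hat{x}\}$ so that both policies take the same first action and share the same successor law, and exhibit the lower-bound instance with $p=1$, $q\to 1$, $x$ just above $\hat{x}=c_M/c$, $x\to 0$.

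The one genuine technical difference is how you close the recursion on $\{x\le\hat{x}\}$. The paper proves $V_{\pi_m,k}(x)\le V_k(x)/(1-\beta)$ by induction on the finite-horizon value functions $V_k$, which quietly requires that the $k$-stage optimal policy also selects P1 on $[0,\hat{x}]$ (true, by the same concavity argument as Proposition~\ref{less} applied to $V_{k-1}$, but not stated). Your supremum-contraction argument with $M=\sup_x[(1-\beta)V_{\pi_m}(x)-V(x)]$ works directly on the infinite-horizon functions and therefore uses Proposition~\ref{less} exactly as stated; it is a little cleaner for that reason. Both routes are standard and yield the same bound with the same ingredients.
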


 \emph{Sketch of Proof:}  Let's rescale costs so that $c_L=0$. Let the Markov chain be fully observable (i.e., $p=1$), and let it change very slowly (i.e., $q\rightarrow1$). Let the initial probability $x>0$ be very small. Thus, with a very high probability, path P1 starts in state $L$ and remain in that state for very long time thereafter. Now choose $c_M$ slightly smaller than $xc_H+(1-x)c_L=xc_H$ so that $\pi_m$ chooses path P2 at the beginning. Without exploring path P1, the belief state $x$ will gradually increase with time and in turn $\pi_m$ continues choosing path P2 instead of exploring path P1. Hence,  policy $\pi_m$ will always choose path P2 generating cost $c_M$ in every time epoch. But the optimal policy would like to take a little risk exploring path P1 at the beginning to exclude the possibility that it is in state $H$ (which is highly improbable) to keep exploiting the zero cost of state $L$ if this turns out to be the case. If the cost state turns out to be $H$ (which occurs with very low probability), we switch to path P2 thereafter imitating $\pi_m$. Hence exploring path P1 at the beginning  generates a cost of $xc_H \approx c_M$, but from the second time epoch and for a very long time forward the cost under the optimal policy is either always $c_M$ (with prob. $x$) or $c_L=0$ (with prob $1-x$).Simple calculations give the result as $x\rightarrow0$. The detailed proof can be found in Appendix \ref{poapimproof}.

Similar to the proof idea of Proposition \ref{poapie}, we still purposely create the worst case, where  $\pi_m$ always chooses path P2 but  the optimal policy chooses path P1 until a non-zero cost is observed. But with information sharing, we cannot make the expected cost of the optimal policy arbitrarily close to zero.
 Thus, unlike $\pi_{\emptyset}$, PoA of $\pi_m$ is bounded.
This is because obtaining information from travelers allows the platform to significantly reduce the immediate cost.
Without such information, the platform can make terrible routing decision from the start.
However, even with information sharing, the decision making of the platform can still be arbitrarily poor in the long term.
The performance of the myopic platform becomes worse compared to the optimal policy as the discount factor $\beta$ increases and future costs become more important.
As $\beta$ approaches 1, PoA approaches infinity, indicating a great performance loss due to the myopic nature of $\pi_m$.
As this performance loss can be huge, it is crucial to design incentive mechanisms for $\pi_{opt}$ for achieving incentive compatibility.

\section{Information Restriction Mechanism for Incentive Compatibility of $\pi_{opt}$}\label{mechanism}
To provide incentives for users to follow the recommendations of the optimal platform,
we propose a novel information restriction mechanism.
The idea is to hide from users the information collected by the platform from the previous travelers
and supply only the path recommendation.
This is equivalent to keep private the information about the current value of the
belief state $x$ that the optimal platform has constructed.
Hence, a user knows only her current path recommendation besides knowing the
statistical properties of the paths and the platform algorithm.

We use the concept of correlated equilibrium (proposed by Robert Aumann \cite{aumann1974subjectivity}).
In this model the platform provides a private signal to the players
which then act in their best interest under information uncertainty.
In our case the platform offers a private signal (its recommendation) and users decide to follow it or not.
If no user would want to deviate from the recommendation assuming the others don't deviate,
we say all users following recommendations is  a correlated equilibrium.
The mechanism we propose here does not require the optimal policy to be of threshold type (Conjecture \ref{conjecture}),
and its incentive properties are just related to properties of the value function of the optimal policy.

The optimal policy always produces a partition of the belief state space
$\mathcal{X}=[0,1]$ into two sets $\mathcal{X}_1$, $\mathcal{X}_2$,
where $\mathcal{X}_a$, $a=1,2$ is the set of belief states for which the optimal policy $\pi_{opt}$ chooses action $a$.
Our signalling mechanism is defined as follows.
\begin{definition}
Information Restriction Mechanism (IRM): The platform hides  the history of observations
(hence the belief state information $x$) from the users.
It follows $\pi_{opt}$ in \eqref{optpolicy} and recommends P1 when the belief state $x\in\mathcal{X}_1$ and
P2 when $x\in\mathcal{X}_2$.
\end{definition}

IRM is incentive compatible if no user wants to deviate from her path recommendation unilaterally.

Next, we analyse the users' actions (to follow the recommendation or not) in the correlated equilibrium under this mechanism. Although users have no knowledge of $x$ in real time, they are aware of the actual Markov chain model of the paths,
the value of the parameters and the algorithm of the platform.
They will reverse-engineer the platform recommendation to estimate the possible values of the actual belief state
and based on that decide on following the recommendation or not.
More specifically, when the recommendation is P2,
the user will infer that  the current system state $x$ must be in $\mathcal{X}_2$,
which implies that $x\ge\hat{x}$ by Proposition \ref{less}.
Note that the user benefits from choosing P1 for $x\leq \hat{x}$ and P2 for $x>\hat{x}$. Thus, the user will follow the recommendation of P2.
When the recommendation is P1, the user infers that the current system state $x$ must be in $\mathcal{X}_1$.
We can prove that in the average sense she benefits by choosing P1 assuming the rest of the users do the same,
and hence she will follow the recommendation of IRM. The incentive compatibility and the efficiency of IRM are formally stated in the
next theorem.
\begin{theorem}\label{theorem}
Under IRM, all users following the optimal platform's recommendation is a correlated equilibrium.
Thus, our IRM achieves optimality and  $PoA=1$ .
\end{theorem}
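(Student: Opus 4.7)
The plan is to verify the correlated equilibrium condition separately for each recommendation issued by IRM, and then conclude $PoA=1$ from the fact that the equilibrium executes $\pi_{opt}$ exactly. Consider first the recommendation P2. Since $\mathcal{X}_2 \subseteq (\hat{x},1]$ by Proposition \ref{less}, a user told to take P2 deduces $x>\hat{x}$ almost surely, so her expected immediate cost of deviating to P1 is strictly larger than $\hat{x} c_H + (1-\hat{x}) c_L = c_M$; she therefore strictly prefers to follow, and no averaging is even needed.

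The interesting case is the recommendation P1, because $\mathcal{X}_1$ contains both the myopic region $[0,\hat{x}]$ and possibly exploration states with $x>\hat{x}$, so pointwise one may have $x c_H + (1-x) c_L > c_M$. The user follows iff $E[x \mid \text{rec}=P1] \le \hat{x}$ under her posterior. To establish this I would compare $\pi_{opt}$ to the trivial benchmark policy that always routes to P2, whose discounted cost is $c_M/(1-\beta)$. Optimality of $\pi_{opt}$ gives $V(x_1) \le c_M/(1-\beta)$. Rewriting $V(x_1) = (1-\beta)^{-1} E_{\rho_\beta}[C(x,\pi_{opt}(x))]$ in terms of the discounted state-occupancy measure $\rho_\beta$ induced by $\pi_{opt}$, this inequality becomes $E_{\rho_\beta}[C] \le c_M$. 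Decomposing by recommendation, $\rho_\beta(\mathcal{X}_1)\, E_{\rho_\beta}[x c_H + (1-x) c_L \mid \mathcal{X}_1] + \rho_\beta(\mathcal{X}_2) c_M \le c_M$, which after rearrangement yields exactly $E_{\rho_\beta}[x \mid \mathcal{X}_1] \le \hat{x}$, the incentive condition for following when told P1.

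Combining both cases, following is a best response under either recommendation, so the profile in which every user follows her signal is a correlated equilibrium. Along this equilibrium path the platform implements $\pi_{opt}$ exactly, so the realized expected total discounted cost equals the optimum $V(x_1)$ and hence $PoA=1$. The hard part is the rec=P1 case: exploration can push individual $x \in \mathcal{X}_1$ strictly past $\hat{x}$, so no state-by-state incentive argument can work; one must pass to an averaging measure and exploit the optimality of $\pi_{opt}$ against the always-P2 benchmark to control the averaged immediate cost. The subtle modeling point is to identify this averaging measure (here, the discounted occupancy $\rho_\beta$, which coincides with the natural steady-state prior a sophisticated user would use when she does not know her position $t$ in the time horizon) with the user's posterior distribution on $x$ conditional on her recommendation and on everyone else following.
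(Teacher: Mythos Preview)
Your overall structure matches the paper's: handle the P2 recommendation pointwise via Proposition~\ref{less}, and handle the P1 recommendation by an averaging argument that compares $\pi_{opt}$ against the trivial always-P2 policy. The decomposition and the use of the benchmark $c_M/(1-\beta)$ are exactly what the paper does.

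There is, however, a genuine gap in your identification of the user's posterior with the discounted occupancy measure $\rho_\beta$ from a \emph{fixed} initial state $x_1$. A user who arrives after the system has been running for a long time and who assumes all other users follow $\pi_{opt}$ sees the belief state distributed according to the \emph{stationary} distribution $P^{\pi_{opt}}$ of the Markov chain on belief states induced by $\pi_{opt}$; this is what she must condition on, and it is the measure the paper uses. Your $\rho_\beta$ from a fixed $x_1$ is a different object, and your parenthetical claim that it ``coincides with the natural steady-state prior'' is not correct in general. Consequently the inequality you derive, $E_{\rho_\beta}[xc_H+(1-x)c_L\mid \mathcal{X}_1]\le c_M$, is not the inequality the user actually needs.

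The paper closes this gap with one extra idea that you are missing: take the \emph{initial} distribution of the belief state to be $P^{\pi_{opt}}$ itself. By stationarity the state at every time $t$ is then distributed as $P^{\pi_{opt}}$, so the optimal expected discounted cost averaged over this initial law equals $\lambda_{\pi_{opt}}/(1-\beta)$, where $\lambda_{\pi_{opt}}=\int_{\mathcal{X}_1}(xc_H+(1-x)c_L)\,dP^{\pi_{opt}}+\int_{\mathcal{X}_2}c_M\,dP^{\pi_{opt}}$ is the stationary per-step cost. Comparing with the always-P2 cost $c_M/(1-\beta)$ yields $\lambda_{\pi_{opt}}\le c_M$, which is exactly the conditional-on-$\mathcal{X}_1$ inequality with respect to the \emph{correct} measure $P^{\pi_{opt}}$. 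In your language, this is equivalent to averaging your $x_1$ over $P^{\pi_{opt}}$, after which $\rho_\beta$ and $P^{\pi_{opt}}$ do coincide; but that averaging step is the crux and must be stated explicitly, not asserted as an identification.
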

\begin{proof}
Consider a user's point of view at time $t$ who assumes  that all the other users follow the optimal platform's recommendation. Lacking any information about the history of the path state and assuming that the system operates already for very long time and the rest of the users follow the recommendation of the platform, her best estimate of the belief state $x$ is the stationary distribution $P^{\pi_{opt}}(x)$ under the optimal policy $\pi_{opt}$ which then can be conditioned on the recommendation for P1 or P2. To prove our result we don't need to evaluate this distribution analytically, but we need to establish certain properties of $P^{\pi_{opt}}(x)$. To do that we use $P^{\pi_{opt}}(x)$ to evaluate the long-run \emph{un-discounted} average cost $\lambda_{\pi_{opt}}$ that the system would incur if the platform uses the \emph{discounted} cost optimal policy $\pi_{opt}$ and users follow it\footnote{Note that this is not the cost minimised by the platform and we only use it to establish a relation involving $P^{\pi_{opt}}(x)$ to be used later in the proof.}. Then
$\lambda_{\pi_{opt}}$ can be computed according to the stationary distribution $P^{\pi_{opt}}$.
 \begin{align}\label{claim}
\lambda_{\pi_{opt}}&=\int_{\mathcal{X}_1}(xc_H+(1-x)c_L)\dif P^{\pi_{opt}}(x)+\int_{\mathcal{X}_2}c_M\dif P^{\pi_{opt}}(x)\notag\\
&\le c_M,
\end{align}
where we used the claim that $\lambda_{\pi_{opt}}$ is  less than $c_M$ (to be proved later). The formula above simply states that when in $x\in \mathcal{X}_1$ the average cost of a user following the recommendations is $xc_H+(1-x)c_L$ and when $x\in \mathcal{X}_2$ this cost  is $c_M$.

When the recommendation is P2, the user can reverse engineer the recommendation to infer  that the current belief state $x$ must be in $\mathcal{X}_2$. According to  Proposition \ref{less}, whenever $x\in\mathcal{X}_2$, it follows that
\begin{equation}
\label{claim2}
xc_H+(1-x)c_L\ge c_M.
\end{equation}
 The user can compute the expected cost of  travelling along P1 when the recommendation is P2 according to the stationary distribution $P^{\pi_{opt}}$  of the belief state $x$. By \eqref{claim2}  it is larger than $c_M$, that is,
\begin{align}
E_{P^{\pi_{opt}}}&[xc_H+(1-x)c_L|x\in \mathcal{X}_2]\notag\\
=&\frac{\int_{\mathcal{X}_2}(xc_H+(1-x)c_L)\dif P^{\pi_{opt}}(x)}{\int_{\mathcal{X}_2}\dif P^{\pi_{opt}}(x)}\notag\\
\ge&\frac{\int_{\mathcal{X}_2}c_M\dif P^{\pi_{opt}}(x)}{\int_{\mathcal{X}_2}\dif P^{\pi_{opt}}(x)}=c_M.\nonumber
\end{align}
Hence, the platform user will follow the recommendation to choose P2. When the recommendation is P1, the user infers that the current system state $x$ must be in $\mathcal{X}_1$. She will compute the expected cost of  travelling along P1 according to $P^{\pi_{opt}}$. By using \eqref{claim}, this cost is smaller than $c_M$ since
\begin{align}
E_{P^{\pi_{opt}}}&[xc_H+(1-x)c_L|x\in \mathcal{X}_1]\notag\\
=&\frac{\int_{\mathcal{X}_1}(xc_H+(1-x)c_L)\dif P^{\pi_{opt}}(x)}{\int_{\mathcal{X}_1}\dif P^{\pi_{opt}}(x)}\notag\\
=&\frac{\lambda_{\pi_{opt}}-\int_{\mathcal{X}_2}c_M\dif P^{\pi_{opt}}(x)}{\int_{\mathcal{X}_1}\dif P^{\pi_{opt}}(x)}\notag\\
\le&\frac{c_M-\int_{\mathcal{X}_2}c_M\dif P^{\pi_{opt}}(x)}{\int_{\mathcal{X}_1}\dif P^{\pi_{opt}}(x)}=c_M.\notag
\end{align}
Hence, each myopic user will follow the recommendation to choose P1.

Now we still need to prove our claim that \eqref{claim} holds. Assume the initial distribution of the belief state is the stationary distribution $P^{\pi_{opt}}$. Then, the belief state $x_t$ at any time $t$ has the same probability distribution $P^{\pi_{opt}}$. Since policy $\pi_{opt}$ is optimal for the total discounted cost minimization problem, the resulting optimal expected total discounted cost averaged over the initial state distribution $P^{\pi_{opt}}$ is
\begin{align*}
\sum_{t=1}^\infty&\beta^{t-1}\bigg(\int_{\mathcal{X}_1}(xc_H+(1-x)c_L)\dif P^{\pi_{opt}}(x)+\notag\\
&\int_{\mathcal{X}_2}c_M\dif P^{\pi_{opt}}( x)\bigg)\notag\\
=&\sum_{t=1}^\infty\beta^{t-1}\lambda_{\pi_{opt}}=\frac{\lambda_{\pi_{opt}}}{1-\beta}.
\nonumber
\end{align*}
Note that one can always choose path P2 at each time epoch and the resulting expected total discounted cost is
$$\sum_{t=1}^\infty\beta^{t-1}c_M=\frac{c_M}{1-\beta},$$
which must be larger than the optimal cost. Thus,
$$\frac{\lambda_{\pi_{opt}}}{1-\beta}\le\frac{c_M}{1-\beta}$$
or $\lambda_{\pi_{opt}}\leq c_M$. Thus, \eqref{claim} holds and this completes the proof.
\end{proof}

\section{Reinforcement Learning Platform}\label{RL}
In practice, it may be difficult to develop an exact POMDP model for analysing the routing policy,
either because of the many unknown parameter values or because such a Markovian model may not be
sensible. Hence we expect that platforms will
resort to model-free reinforcement learning techniques  such as Q-learning \cite{watkins1989learning}.
We want  to obtain some insights on how reinforcement learning, which leads to sub-optimal platforms,
affects our mechanism results regarding incentive compatibility. In particular, we want  to
make the following conjecture which we have been able to test with experiments.
\begin{conjecture}
Under the IRM, as the machine learning algorithm becomes more efficient in reducing the average system cost,
the range of  system parameters for which the users follow the recommendations of the platform increases.
In particular, in the case of Q-learning algorithms (see \eqref{updaterule}), as $K\rightarrow \infty$, IRM induces IC.
\end{conjecture}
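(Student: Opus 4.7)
The plan is to reduce the conjecture to Theorem~\ref{theorem} via a continuity argument in the learning accuracy. First, I would invoke the standard convergence theorem for Q-learning to argue that as $K\to\infty$ the learned table $Q_K(x,a)$ converges to the true $Q(x,a)$ defined in \eqref{DP}, so that the greedy policy $\pi_Q^{(K)}$ agrees with the optimal POMDP policy $\pi_{opt}$ everywhere except possibly on a shrinking neighbourhood of the indifference set $\{x:Q(x,1)=Q(x,2)\}$. Second, I would re-run the bookkeeping in the proof of Theorem~\ref{theorem} with the perturbed partition $(\mathcal{X}_1^{(K)},\mathcal{X}_2^{(K)})$ and the perturbed stationary distribution $P^{(K)}$, and show that both incentive inequalities survive for $K$ large enough. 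The ``range of parameters grows'' statement then follows by viewing the two incentive inequalities as jointly continuous in the learning accuracy and the system parameters $(p,q,c,c_M,\beta)$: as the learning error shrinks, the sub-level set on which the inequalities hold expands, and in the limit fills the entire non-degenerate region.

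For Step~1 I would rely on the classical fact that, on a finite discretisation of the belief space, with Robbins-Monro step sizes and infinite visitation of every (state, action) pair, $Q_K\to Q$ almost surely in sup-norm; therefore the symmetric difference $\mathcal{X}_a^{(K)}\triangle\mathcal{X}_a$ is contained in a shrinking neighbourhood of the indifference set, whose Lebesgue measure tends to zero.

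For Step~2, note that under Assumption~\ref{assum} with $q<1$ the latent chain on $\{H,L\}$ is aperiodic and irreducible, so the induced belief chain admits a unique stationary distribution $P^{(K)}$ for any measurable policy, and $P^{(K)}\to P^{\pi_{opt}}$ weakly together with the long-run average cost $\lambda_K\to\lambda_{\pi_{opt}}$. Repeating the computations in the proof of Theorem~\ref{theorem} with $P^{(K)}$ in place of $P^{\pi_{opt}}$ yields
\begin{align*}
E_{P^{(K)}}[xc_H+(1-x)c_L\mid x\in\mathcal{X}_2^{(K)}]&\ge c_M-\epsilon_K,\\
E_{P^{(K)}}[xc_H+(1-x)c_L\mid x\in\mathcal{X}_1^{(K)}]&\le c_M+\epsilon_K,
\end{align*}
with $\epsilon_K\to0$. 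Whenever the optimal-policy inequalities are strict (i.e.\ in the interior of the feasible parameter region) both perturbed inequalities continue to hold for all sufficiently large $K$, which delivers the correlated-equilibrium property and therefore IC.

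The main obstacle is that Proposition~\ref{less} enters the proof of Theorem~\ref{theorem} in a \emph{pointwise} way, guaranteeing $xc_H+(1-x)c_L\ge c_M$ for every $x\in\mathcal{X}_2$. For the Q-learned partition this can fail on the misclassification set $\mathcal{X}_2^{(K)}\setminus\mathcal{X}_2$, which may contain beliefs with $x<\hat{x}$. The delicate step is to show that the stationary mass on this misclassification set shrinks fast enough, relative to the magnitude of the pointwise violation, that it cannot flip the conditional-expectation inequality; this will need quantitative sup-norm control of $\|Q_K-Q\|_\infty$ together with a smoothness estimate for the belief kernel near $\hat{x}$. A secondary technicality is that the bound $\lambda_{\pi_{opt}}\le c_M$ used at the end of the proof of Theorem~\ref{theorem} relies on exact discounted optimality of $\pi_{opt}$; for the approximately optimal $\pi_Q^{(K)}$ one must instead argue $\lambda_K\le c_M+o(1)$ by comparing the discounted value of $\pi_Q^{(K)}$ with that of the always-P2 policy and then passing to the undiscounted limit via an Abelian/Tauberian argument.
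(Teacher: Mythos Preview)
The paper does \emph{not} prove this statement. It is explicitly stated as a conjecture, and the only support the authors offer is numerical: they solve the steady-state Q-learning equations \eqref{Qlearning} for $K=1,\dots,6$, compute the stationary distribution $P^{\pi_{\bar Q}}$, and check the two incentive inequalities \eqref{ic1}--\eqref{ic2} on a fine grid of $(c_M,q,\beta)$; Fig.~\ref{IC} shows the non-IC region shrinking as $K$ grows and vanishing by $K\ge 6$. There is no analytic argument in the paper to compare your proposal against.

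That said, your outline is a sensible programme, and you have correctly identified the two real obstructions yourself. I would flag two further issues you glossed over. First, the paper's Q-learning does \emph{not} run on a discretised belief state $x$; the state is the raw observation vector $y\in\{0,1,\emptyset\}^K$, and the map $y\mapsto \Pr[H\mid y]$ is neither injective nor onto a uniform grid in $[0,1]$. So the ``$Q_K(x,a)\to Q(x,a)$ in sup-norm'' statement is not the standard tabular Q-learning convergence theorem; you need the extra step (sketched informally in the paper after Fig.~\ref{Qapprox}) that the reachable beliefs $\{\Pr[H\mid y]:y\in\mathcal Y\}$ become dense enough in the support of $P^{\pi_{opt}}$ as $K\to\infty$, and that the fixed-point system \eqref{Qlearning} over $\mathcal Y$ approximates \eqref{DP}. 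Second, even granting sup-norm convergence of the Q-values, convergence of the \emph{greedy policy} and hence of the stationary distribution $P^{(K)}\to P^{\pi_{opt}}$ is delicate exactly at parameter values where the optimal policy is indifferent on a set of positive stationary mass; your ``interior of the feasible parameter region'' caveat is doing a lot of work here and would need a transversality or genericity argument to be made rigorous. The first part of the conjecture (monotonicity of the IC region in the learning accuracy, not just the limit) is stronger still and your proposal does not really address it.
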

In simple terms, increasing platform efficiency combined with hiding information induces incentive compatibility in
a wider range of systems.
In this section we analyse the performance of such a learning platform
and measure its performance loss from the optimal platform
benchmark.

The classical Q-learning algorithm estimates the Q-value function in an online fashion and
computes the optimal policy according to Q-values computed for all possible system states and actions.
In this case, state $y$ records the latest $K$ observations (cost reports by the last $K$ travelers),
where $K$ is a parameter of the learning algorithm.
For each possible action $a$ in state $y$
the Q-value maps  the state-action tuple $(y,a)$ to the anticipated cost,
and the optimal action corresponding to  the minimum Q-value is chosen.
The platform updates the Q-values for each $(y,a)$  over time by learning from the path observations the actual costs that
such actions generate in the given context.

We expect the performance of Q-learning to improve as $K$ increases, since the system
makes decisions in a more detailed context. Another way to see this is that a larger $K$ allows for a better
estimate of the correct value of the belief state $x$ that the POMDP-based optimal platform would like to use
for its decisions. But larger values of $K$ come at an exponential increase of the size of the state-space $\mathcal{Y}$
(which is $3^K$ with each observation being $0/1/\emptyset$) and influence the time Q-learning needs to converge in its optimal choices.
Our numerical results later suggest that a small $K$ such as $K=3$ already provides near-optimal performance.
Next we describe the Q-learning algorithm adapted to our problem.

Given observation history $y_t$ before time $t$, the platform takes action $a_t$ and incurs actual cost $c^\prime_t$,
and updates the observation vector from $y_t$ to $y^\prime$. The Q-value is updated as:
{\small
\begin{align}\label{updaterule}
&Q_{t+1}(y,a)=\notag\\
&\quad\begin{cases}
\alpha_t(y,a)(c^\prime_t+\beta\min\limits_{a^\prime\in\{1,2\}} Q_t(y^\prime,a^\prime))\\
+(1-\alpha_t(y,a))Q_t(y,a)\qquad\mbox{if }y=y_t,\mbox{ and }a=a_t,\\
Q_t(y,a)\qquad\qquad\qquad\qquad\ \ \mbox{otherwise},
\end{cases}
\end{align}}where $\alpha_t$ is the learning rate. It is known from \cite{watkins1992q} that Q-learning converges if each $(y,a)$ tuple  is performed infinitely often and $\alpha_t(y,a)$ satisfies for each tuple $(y,a)$,
$$\sum\limits_{t=1}^\infty\alpha_t(y,a)=\infty \mbox{ and } \sum\limits_{t=1}^\infty\alpha_t^2(y,a)<\infty.$$
In our implementation we use $\alpha_t(y,a)=\frac{1}{(1+N(y,a,t))^\omega}$
where $N(y,a,t)$ is the number of times that the platform observes $y$ and performs $a$ until time
$t$ and $\omega\in(0.5,1]$ (as suggested in \cite{even2003learning}).
We next show that Q-learning has good performance when applied to the
benchmark POMDP path model
and then investigate incentive compatibility for users of this platform.

\subsection{Performance Analysis of Q-learning Platform}

In this subsection we provide a methodology for calculating the parameters of Q-learning
after it converges and hence solving the path selection policy obtained by Q-learning.
This allows us to compare this policy with the optimal policy $\pi_{opt}$ of the POMDP and formulate
the incentive compatibility problem faced by the users in Q-learning platform.

Using the results in \cite{singh1994learning} regarding the steady state values of the
parameters of the Q-learning algorithm, we obtain that
our Q-learning algorithm converges with probability 1 to the solution for each $y\in\mathcal{Y}$
of the following system of equations:
{\small
\begin{align}\label{Qlearning}
&Q(y,1)=\notag (\Pr[H|y]p+\Pr[L|y](1-p))(c+\beta\min_{a\in\{1,2\}}Q(y^\prime(1),a))\notag\\
&\quad+(\Pr[L|y]p+\Pr[H|y](1-p))(0+\beta\min_{a\in\{1,2\}}Q(y^\prime(0),a)),\notag\\
&Q(y,2)=c_M+\beta\min_{a\in\{1,2\}}Q(y^\prime(\emptyset),a).
\end{align}}Here, $y^\prime$ is the sequence of latest $K$ observations after the transition
by appending the last observation (0, 1, or $\emptyset$) to the vector $y$ after removing its first element.
$\Pr[H|y]$, $\Pr[L|y]$ are the asymptotic probabilities that the underlying cost state is $H$, $L$, respectively,
given that the sequence of $K$ latest observations is $y$.

\begin{figure}[!t]
  \centering\vspace{-5pt}
  \includegraphics[width=2.2in]{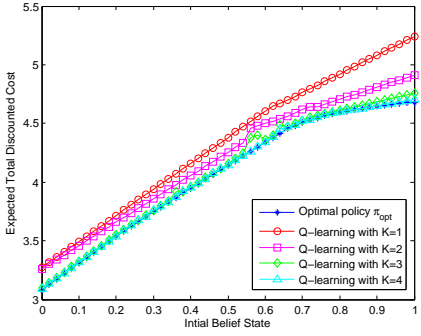}\vspace{-5pt}
  \caption{The expected total discounted cost of $\pi_{\bar{Q}}$ compared to the optimal total expected discounted cost for different values of $K$.  For small values of $K$, the Q-learning is suboptimal leading to higher values of cost compared to the optimal policy. As $K$ increases,
the expected total discounted cost of $\pi_{\bar{Q}}$  becomes closer to  the optimal total expected discounted cost. Here we set $\beta=p=q=0.9$, $c=1$, $c_M=0.5$. }
  \label{Qapprox}\vspace{-15pt}
\end{figure}


\begin{figure*}
\centering
\subfigure[$c_M$ regime]{
\begin{minipage}[b]{0.31\linewidth}\label{ICcm}
  \centering
  \includegraphics[width=2in]{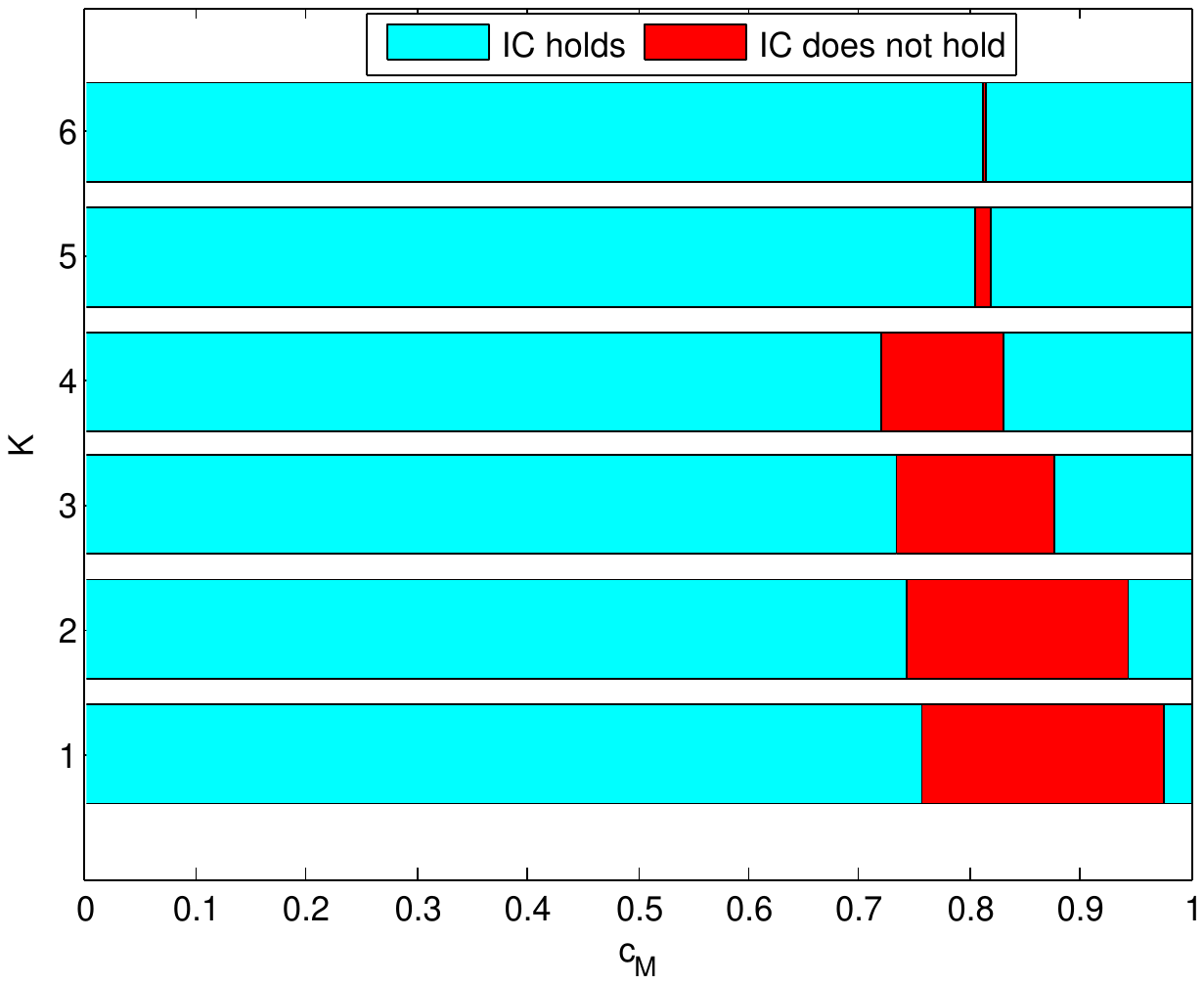}
\end{minipage}
}
\subfigure[$q$ regime]{
\begin{minipage}[b]{0.31\linewidth}\label{ICq}
 \centering
  \includegraphics[width=1.9in]{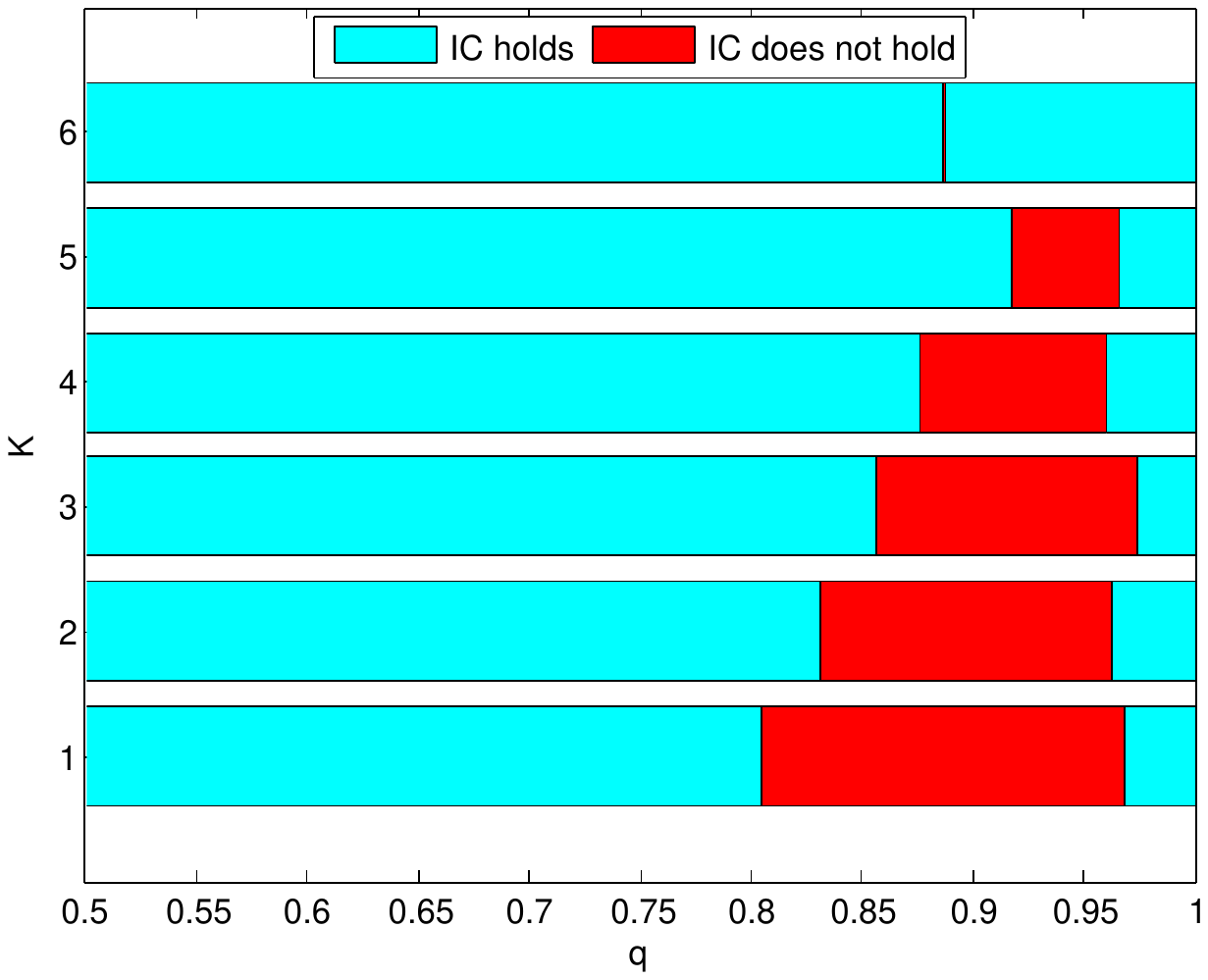}
\end{minipage}
}
\subfigure[$\beta$ regime]{
\begin{minipage}[b]{0.31\linewidth}\label{ICbeta}
  \centering
  \includegraphics[width=1.9in]{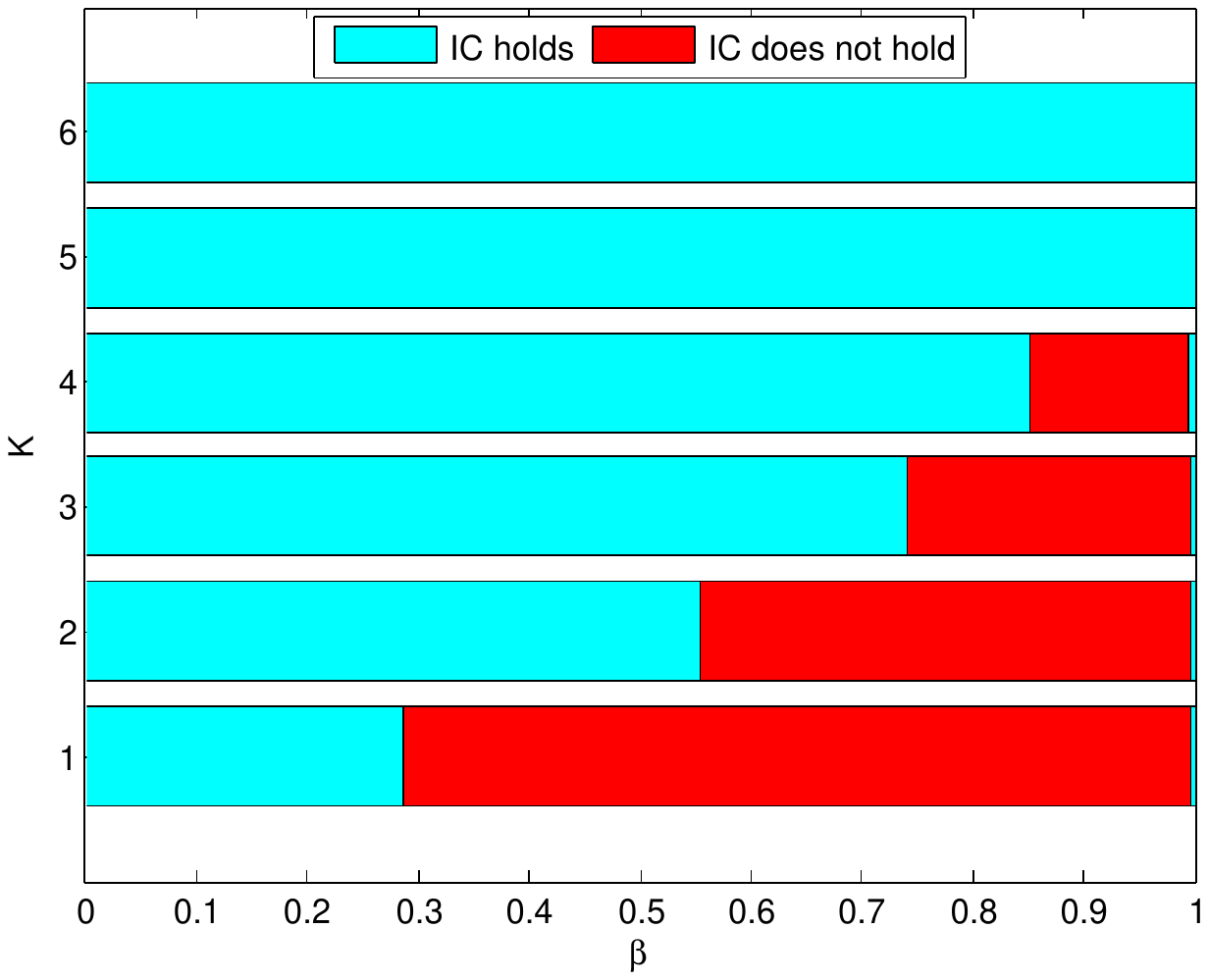}
\end{minipage}
}\vspace{-10pt}
  \caption{From left to right, we plot in red the regimes of (a) all possible $c_M$ and $K$ values, (b) all possible $q$ and $K$ values and (c) all possible $p$ and $K$ values for which incentive compatibility does not hold. In general we see that IC holds for large range of instances and as $K$ increases, the interval of regime in which the IC does not hold becomes smaller. In Fig. \ref{ICcm}, we set $\beta=p=q=0.9$, $c=1$, and $c_M=0,0.001,0.002,\cdots,0.999,1$.  In Fig. \ref{ICq}, we set $\beta=p=0.9$, $c_M=0.8$, $c=1$, and $q=0.5,0.501,\cdots,0.999$. In Fig. \ref{ICbeta}, we set $p=q=0.9$, $c_M=0.8$, $c=1$, and $\beta=0.001,0.002,\cdots,0.999$.}
  \label{IC}\vspace{-10pt}
\end{figure*}

We can use \eqref{bayes1} and \eqref{stateupdate} to compute $x_{t+1}$ from some initial state $x_1$ (assumed $1/2$
in our specific case or equal to the steady state distribution of the path Markov chain in general)
and any sequence of observations $y=(y_1,\ldots,y_t)$. This defines $\Pr[H|y]$ (and $\Pr[L|y]=1-\Pr[H|y]$)
in \eqref{Qlearning} for all possible values $y\in\mathcal{Y}$.

Let $\bar{Q}(y,a)$ be the solution to equation \eqref{Qlearning} with corresponding (asymptotic) policy $\pi_{\bar{Q}}$.
This takes the action with the minimum Q-value, i.e.,
\begin{equation}\label{Qpolicy}
\pi_{\bar{Q}}(y)=\arg\min_{a\in\{1,2\}}\bar{Q}(y,a), \qquad\forall\ y\in\mathcal{Y},
\end{equation}
where $\mathcal{Y}$ is the set of all possible $K$ latest observations and its size is $3^K$.
Clearly, \eqref{Qlearning} cannot be solved analytically and thus we obtain policy $\pi_{\bar{Q}}$ numerically using value iteration.

In Fig.\,\ref{Qapprox}, we plot the expected total discounted costs of policy $\pi_{\bar{Q}}$ for different values of $K\in\{1,2,3,4\}$ and compare these costs to the optimal policy $\pi_{opt}$ as functions
of the initial belief state $x$. Since Q-learning does not deal with belief states, we convert any initial $x$ into an
appropriate initial state $y(x)$ for Q-learning, by choosing the $y\in\mathcal{Y}$ to make the value of $x$
most probable:
$$y(x)=\min_{y\in\mathcal{Y}}\left\vert\Pr[H|y]-x\right\vert.$$
In Fig. \ref{Qapprox} we first observe the curves for small values of $K=1$ and 2. When initial belief state $x$ is close to 0 or 1, the gap between $\pi_{\bar{Q}}$  and $\pi_{opt}$ is more obvious.
This is because if $y$ has few elements with $y=1$ or $y=0$, it cannot approximate $x$ at the two extremes near 0 and 1. Since $\mathcal{Y}$ is a finite set, the corresponding values of $x$ are $Pr[H|y] \in [1-q, q]$ for all values of $y$, not containing $x = 0$ and $x=1$.
We further observe that as $K$ increases,
the expected total discounted cost of $\pi_{\bar{Q}}$  becomes closer to the optimal cost. 

We conclude that as $K$ increases, the Q-learning policy $\pi_{\bar{Q}}$  approximates the optimal policy more accurately.
To see this imagine that we run two versions of Q-learning, both using the belief state $x$
instead of the vector $y$, by discretising (finely) $[0,1]$ to make the state space $x\in\mathcal{X}$
finite (since Q-learning operates
over a finite set of states).
In the first version we use Bayesian updates for $x$.
In the second version we directly construct $x$ from the $K$-vector $y$ of last observations.
We expect as the state space of $x$ becomes finer, the first algorithm converges to the solution of the POMDP,
while the second algorithm converges to the original Q-learning based on $y$.
Now as $K\rightarrow\infty$, the value of $x$ used in both algorithms will tend to be the same.
Hence we expect as $K\rightarrow\infty$, Q-learning approaches the solution $\pi_{opt}$ of the POMDP.

\subsection{Information Restriction for Q-learning Platform}


We continue with the analysis of user incentives for the Q-learning Platform as in the case of the optimal platform.
Again we assume that users are sophisticated, have full information in how Q-learning works and can reverse-engineer
the Q-leaning policy $\pi_{\bar{Q}}$ to  decide whether to follow or not.
We define our IRM mechanism as before: the platform hides the history of user observations.
At each time, when the history of latest $K$ observations is $y\in\mathcal{Y}_1$ it recommends P1
and when $y\in\mathcal{Y}_2$ it recommends P2, as dictated by $\pi_{\bar{Q}}$. Here $\mathcal{Y}_a$ is the set of $y$ under which $\pi_{\bar{Q}}$ recommends action $a$.

As before, knowing $\pi_{\bar{Q}}$ and assuming that all users follow it,
a sophisticated user computes the asymptotic probability distribution
$P^{\pi_{\bar{Q}}}(y)$ of the last $K$ observation vector $y$.
Let $c_1(y)$ be the expected cost of taking action P1 given $y$,
\begin{equation}\label{c1y}
\small
c_1(y)=(\Pr[H|y]p+\Pr[L|y](1-p))c.
\end{equation}
If a user receives path recommendation P1, then she can infer that $y\in\mathcal{Y}_1$ and the expected cost of travelling through path P1 is $\sum\limits_{y\in\mathcal{Y}_1}P^{\pi_{\bar{Q}}}(y|y\in\mathcal{Y}_1)c_1(y)$.
This user will follow recommendation P1 if and only if
\begin{equation}\label{ic1}
\small
\sum\limits_{y\in\mathcal{Y}_1}P^{\pi_{\bar{Q}}}(y|y\in\mathcal{Y}_1)c_1(y)=
\sum\limits_{y\in\mathcal{Y}_1}\frac{P^{\pi_{\bar{Q}}}(y)c_1(y)}{\sum\limits_{y\in\mathcal{Y}_1}P^{\pi_{\bar{Q}}}(y)}\le c_M.
\end{equation}
Similarly, the user will follow  recommendation P2 if and only if
\begin{equation}\label{ic2}
\small
\sum\limits_{y\in\mathcal{Y}_2}P^{\pi_{\bar{Q}}}(y|y\in\mathcal{Y}_2)c_1(y)=
\sum\limits_{y\in\mathcal{Y}_2}\frac{P^{\pi_{\bar{Q}}}(y)c_2(y)}{\sum\limits_{y\in\mathcal{Y}_1}P^{\pi_{\bar{Q}}}(y)}> c_M.
\end{equation}
Therefore, given a fixed combination of system parameters' values, the Q-learning platform is incentive compatible if and only if both \eqref{ic1} and \eqref{ic2} hold.
An interesting question is to determine the range of parameters in the parameter space of the two-path model
for which incentive compatibility may not hold.


Fig. \ref{ICcm} examines the incentive compatibility (IC)  as a function of $c_M$ and $K$. We let $\beta=p=q=0.9$, $c=1$, and $c_M=0,0.001,\cdots,1$, and by solving \eqref{ic1} and \eqref{ic2} we find the regime of all possible $c_M$ values in which the IC does not hold.
 We observe that IC does not hold for all instances.
As $K$ increases,
the interval of values of $c_M$ in which IC does not hold becomes smaller. We also examine the incentive compatibility regarding $q$ in Fig. \ref{ICq} and regarding $\beta$ in Fig. \ref{ICbeta}.
In Fig. \ref{ICq}, we set $\beta=p=0.9$, $c_M=0.8$, $c=1$, and $q=0.5,0.501,\cdots,0.999$.
In Fig. \ref{ICbeta}), we set $p=q=0.9$, $c_M=0.8$, $c=1$, and $\beta=0.001,0.002,\cdots,0.999$.
As $K$ increases,
the interval of values of $\beta$ in which IC does not hold also becomes smaller. In all the three subfigures, we observe that the regime in which the IC does not hold becomes trivial once $K\geq 6$.

One may wonder the reason behind.
As $K$ increases, the Q-learning policy becomes more accurate as an approximation of the optimal policy and  by Theorem \ref{theorem}
IC holds for the optimal policy over all range of system parameters under IRM.
Thus, as the accuracy of the Q-learning policy increases, the information restriction mechanism should become `more'
incentive compatible in the sense that the instances for which IC does not hold become rare.

\section{Extension to a Multi-path Learning Model }\label{multipath}
In this section, we consider a more general network with three parallel paths where one more stochastic path P$1^\prime$ is added to our two-path model in Fig. \ref{tp}.
This new stochastic path P$1^\prime$ follows the same Markov model as path P1 in Fig. \ref{ts}.
Unlike our simple two-path model, we need to update the belief states of both stochastic paths now. Thus we use a belief state vector $x=(x^1,x^{1^\prime})$ whose updating follows the Bayesian inferencing process as in Section \ref{platform}.

We similarly denote the value function by $V(x^1,x^{1^\prime})$ with $V(x^1,x^{1^\prime})=V(x^{1^\prime},x^1)$ due to symmetry. Similar to \eqref{DP}, we define
$Q(x^1,x^{1^\prime},a)$ as the expected discounted cost staring from $x=(x^1,x^{1^\prime})$ if action $a$ is taken at the first time epoch and the optimal policy is followed thereafter.
$Q(x^1,x^{1^\prime},a)$ can be similarly written down as follows:
{\small
 \begin{align*}
&Q(x^1,x^{1^\prime},0)= x^1c_H+(1-x^1)c_L+\beta(x^1p_H+(1-x^1)p_L)\cdot\notag\\
&\quad V\bigg(\frac{x^1p_Hq_{HH}+(1-x^1)p_L(1-q_{LL})}{x^1p_H+(1-x^1)p_L},\\
&\quad x^{1^\prime}q_{HH}+(1-x^{1^\prime})(1-q_{LL})\bigg)+\nonumber\\
 &\quad \beta(x^1(1-p_H)+(1-x^1)(1-p_L))\cdot\notag\\
 &\quad V\bigg(\frac{x^1(1-p_H)q_{HH}+(1-x^1)(1-p_L)(1-q_{LL})}{x^1(1-p_H)+(1-x^1)(1-p_L)},\\
 &\quad x^{1^\prime}q_{HH}+(1-x^{1^\prime})(1-q_{LL})\bigg);\nonumber
 \end{align*}
  \begin{align*}
&Q(x^1,x^{1^\prime},1)=Q(x^{1^\prime},x_1,0);\nonumber\\
&Q(x^1,x^{1^\prime},1^\prime)=c_M+\beta V(x^1q_{HH}+(1-x^1)(1-q_{LL}),\\
&\quad x^{1^\prime}q_{HH}+(1-x^{1^\prime})(1-q_{LL})).
 \end{align*}}Similar to \eqref{optpolicy}, the optimality equation of
our three-path model is:
\begin{equation}\label{DP3links}
\small
V(x^1,x^{1^\prime})=\min_{a\in\{1,1^\prime,2\}}\{Q(x^1,x^{1^\prime},a)\}.
\end{equation}

Similar to Proposition \ref{property}, we can prove \eqref{DP3links} has a unique solution by
using the contraction mapping theorem.
As it is not in closed-form, we compute the value function using standard numerical methods such as value iteration.
We first discretise and partition the belief state space $[0,1]^2$ for $x^1, x^{1^\prime}$ equally into  $100\times100$ grids. In each iteration step, we directly evaluate the value function in each grid by solving \eqref{DP3links}.
Once the value function is obtained, the platform  computes the optimal policy for each
given belief state. We plot the optimal policy for problem \eqref{DP3links} in Fig.~\ref{optimal3link} where we let $p=q=\beta=0.9$, $c=1$ and $c_M=0.7$. We observe in Fig.~\ref{optimal3link} that here when cost belief $x^1$ ($x^{1^\prime}$) is small the optimal policy uses stochastic path P1 (P${1^\prime}$), and when both $x^1$ and $x^{1^\prime}$ are large the optimal policy uses deterministic path P2, which is similar to Proposition \ref{conj}.
We observe that the optimal policy is more complex and cannot be defined in terms of simple threshold rules.

 \begin{figure}
  \centering
  \includegraphics[width=2.1in]{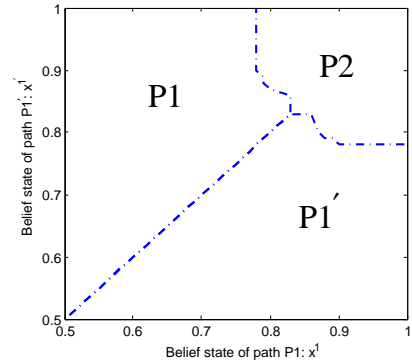}\vspace{-5pt}
  \caption{Optimal policy of path choices for  problem \eqref{DP3links}. We set $p=q=\beta=0.9$, $c=1$ and $c_M=0.7$.}
  \label{optimal3link}\vspace{-10pt}
\end{figure}

Without hiding any information, users will not follow the path recommendations of the optimal platform, and we can still use IRM for achieving incentive compatibility as in Theorem \ref{theorem}.
As for machine learning platforms, extensive numerical experiments show that Q-learning becomes incentive compatible (i.e., users following the platform suggestion is a correlated equilibrium) under IRM, for any value of $K$. Table \ref{tab:results} summarise the range of system parameters under which IC holds, by examining IC constraints in \eqref{ic1} and \eqref{ic2} for $K=1, 2, 3, 4$ and other parameter values exhaustively. We observe that the Q-learning platform is incentive compatible for all instances, which is different from Fig. \ref{IC} for two paths. With more stochastic paths, the quality of Q-learning algorithms improves and it has  close performance to the optimal platform for ensuring incentive compatibility.

\section{Conclusion}\label{conclusion}
\begin{table}[t]
\scriptsize
\centering
\caption{ The range of system parameters under which IC holds for the Q-learning platform. For the row of $c_M$ regime for IC, we let $p=q=\beta=0.9$, $c=1$, $c_M=0,0.01,\cdots,0.99,1$.
For the row of $q$ regime, we let $p=\beta=0.9$, $c=1$, $c_M=0.8$, $q=0.5,0.51,\cdots,0.99$.
For the row of $\beta$ regime,  we let $p=q=0.9$, $c=1$, $c_M=0.8$, $\beta=0.01,0.01=2,\cdots,0.99$. Note that $0\leq c_M\leq 1$, $0.5\leq q< 1$, and $0<\beta<1$. }
\begin{tabular}[tb]{|c|c|c|c|c|}\hline
$K$&1&2&3&4 \\\hline
$c_M$ regime&[0,1] &[0,1]&[0,1]&[0,1]\\\hline
$q$ regime&[0.5,0.99] &[0.5,0.99]&[0.5,0.99]&[0.5,0.99]\\\hline
 $\beta$ regime&[0.01,0.99] &[0.01,0.99]&[0.01,0.99]&[0.01,0.99]\\\hline
\end{tabular}
\label{tab:results}\vspace{-10pt}
\end{table}
In this paper we analyse incentive compatibility issues related to
users following recommendations by travel path optimizing platforms.
We show that socially optimal path recommendations based on past user travel cost history
are not always
incentive compatible since users like to myopically optimise their travel cost.
We discover the surprising result that if users have only access to the socially optimal platform
recommendations (besides full information on system parameters),
following these recommendation is a Nash equilibrium.
Numerical results suggest an interesting conjecture for practical platforms:
information hiding induces incentive compatibility for a wider range of system/network parameters as
the platform approximates closer the optimal platform (i.e., combining path exploration with path exploitation).

\appendices
\section{Proof of Proposition \ref{property}}\label{propertyproof}
First we prove that there is a unique value function $V(x)$ satisfying the optimality equation \eqref{DP} and it is continuous.

Let $V_k(x)$ be the value function of the $k$-stage problem, that is,
$$V_k(x)=\min_{\pi} E_\pi\left[\sum\limits_{t=1}^k \beta^{k-1}C(x_t,a_t)|x_1=x\right].$$
When $k=1$, we have
$$V_1=\min\{xc_H+(1-x)c_L,c_M\}.$$
For any $k\ge2$
  \begin{align}
V_k(x)=&\min\{xc_H+(1-x)c_L+\beta(xp+(1-x)(1-p))\cdot\nonumber\\
 &V_{k-1}\left(\frac{xpq+(1-x)(1-p)(1-q)}{xp+(1-x)(1-p)}\right)\nonumber\\
 &+\beta\big((1-x)p+x(1-p)\big)\cdot\nonumber\\
 &V_{k-1}\left(\frac{x(1-p)q+(1-x)p(1-q)}{x(1-p)+(1-x)p}\right),\nonumber\\
 &c_M+\beta V_{k-1}(xq+(1-x)(1-q))\}.\nonumber
 \end{align}

 Note that $V_1$ is a continuous function in $[0,1]$, that is, $V_1 \in \mathcal{C}[0,1]$. For any function $f\in \mathcal{C}[0,1]$, define
  \begin{align}
Tf(x)=&\min\{xc_H+(1-x)c_L+\beta(xp+(1-x)(1-p))\cdot\nonumber\\
 & f\left(\frac{xpq+(1-x)(1-p)(1-q)}{xp+(1-x)(1-p)}\right)\nonumber\\
 &+\beta\big((1-x)p+x(1-p)\big)\cdot\notag\\
& f\left(\frac{x(1-p)q+(1-x)p(1-q)}{x(1-p)+(1-x)p}\right),\nonumber\\
 &c_M+\beta f(xq+(1-x)(1-q))\}.\nonumber
 \end{align}
 Then $T$ is a map from $\mathcal{C}[0,1]$ to itself. It follows that
 $$V_k=T V_{k-1}=T^{k-1}V_1.$$
For any element $f$ in the space $\mathcal{C}[0,1]$, define the following norm,
$$||f||=\max\limits_{0\le x\le 1}|f(x)|.$$
With respect to this norm, $\mathcal{C}[0,1]$ is a Banach space. Note that $T$ is a contraction map, we can show that
$$||Tf-Tg||\le\beta||f-g||.$$
We need to first show that the following fact is true for any two functions $h$ and $l$ defined on the set $\{0,1\}$,
\begin{equation}\label{minmax}
  \min_{u\in\{0,1\}} h(u)-\min_{u\in\{0,1\}} l(u)\le\max_{u\in\{0,1\}} [h(u)-l(u)].
\end{equation}
Assume
$$u_h\in\arg\min_{u\in\{0,1\}} h(u),\qquad   u_l\in\arg\min_{u\in\{0,1\}} l(u).$$
Then,
\begin{align}
&\min_{u\in\{0,1\}} h(u)-\min_{u\in\{0,1\}} l(u)\notag\\
=&h(u_h)-l(u_l)\nonumber\\
=&h(u_h)-h(u_l)+h(u_l)-l(u_l)\nonumber\\
\le& h(u_l)-l(u_l)\le \max_{u\in\{0,1\}} [h(u)-l(u)].\nonumber
\end{align}
Apply \eqref{minmax}, we have for any $x\in[0,1]$
\begin{equation}
\begin{split}
&Tf(x)-Tg(x)\\
\le&\max\{\beta(xp+(1-x)(1-p))||f-g||\\
&+\beta(x(1-p)+(1-x)p)||f-g||, \beta||f-g||\}\\
=&\beta||f-g||.
\end{split}\nonumber
\end{equation}
Thus,
$$||Tf-Tg||\le\beta||f-g||.$$
By contracting mapping theorem, there is a unique element $V$ in $C[0,1]$ such that
$$V=\lim_{k\rightarrow \infty}V_k,$$
and that
$$V=TV.$$
Thus, there is a unique value function $V$ satisfying the DP equation. $V(x)$ is a continuous function of  $x$.

Next, we prove that $V(x)$ is an increasing function of  $x$. Note that $V_1(x)$ is an increasing function of $x$. Assume $V_{k-1}(x)$ is an increasing function of $x$, then we can prove that $V_{k}(x)$ is the minimum of two increasing functions of $x$. First, it is clear that
$$c_M+\beta V_{k-1}(xq+(1-x)(1-q))$$
is an increasing function of $x$ since $q\geq 1/2$. Also, $x c_H+(1-x)c_L$ is increasing in $x$. We only need to prove that
  \begin{align}
 &\beta(xp+(1-x)(1-p))\cdot\notag\\
 &V_{k-1}\left(\frac{xpq+(1-x)(1-p)(1-q)}{xp+(1-x)(1-p)}\right)+\nonumber\\
 &\beta\big((1-x)p+x(1-p)\big)\cdot\notag\\
 &V_{k-1}\left(\frac{x(1-p)q+(1-x)p(1-q)}{x(1-p)+(1-x)p}\right)\nonumber
 \end{align}
is increasing in $x$. Consider any $y\in[0,1]$ such that $y<x$, since $p,q\ge
\frac{1}{2}$, it is straightforward to prove that
$$\frac{xpq+(1-x)(1-p)(1-q)}{xp+(1-x)(1-p)}>\frac{ypq+(1-y)(1-p)(1-q)}{yp+(1-y)(1-p)},$$
\begin{align}
\frac{xpq+(1-x)(1-p)(1-q)}{xp+(1-x)(1-p)}>& \frac{x(1-p)q+(1-x)p(1-q)}{x(1-p)+(1-x)p}\notag\\
>&\frac{y(1-p)q+(1-y)p(1-q)}{y(1-p)+(1-y)p},\notag
\end{align}
and
$$xp+(1-x)(1-p)>yp+(1-y)(1-p).$$
Then, it follows from the induction hypothesis that
  \begin{align}
 &\beta(xp+(1-x)(1-p))V_{k-1}\left(\frac{xpq+(1-x)(1-p)(1-q)}{xp+(1-x)(1-p)}\right)\nonumber\\
 &+\beta\big((1-x)p+x(1-p)\big)\cdot\notag\\
 &V_{k-1}\left(\frac{x(1-p)q+(1-x)p(1-q)}{x(1-p)+(1-x)p}\right)\nonumber\\
 >&\beta(yp+(1-y)(1-p))V_{k-1}\left(\frac{ypq+(1-y)(1-p)(1-q)}{yp+(1-y)(1-p)}\right)\nonumber\\
 &+\beta((xp+(1-x)(1-p))-(yp+(1-y)(1-p)))\cdot\notag\\
 &V_{k-1}\left(\frac{y(1-p)q+(1-y)p(1-q)}{y(1-p)+(1-y)p}\right)+\nonumber\\
 &\beta((1-x)p+x(1-p))V_{k-1}\left(\frac{y(1-p)q+(1-y)p(1-q)}{y(1-p)+(1-y)p}\right)\nonumber\\
 =&\beta(yp+(1-y)(1-p))V_{k-1}\left(\frac{ypq+(1-y)(1-p)(1-q)}{yp+(1-y)(1-p)}\right)\nonumber\\
 &+\beta((1-y)p+y(1-p))\cdot\notag\\
 &V_{k-1}\left(\frac{y(1-p)q+(1-y)p(1-q)}{y(1-p)+(1-y)p}\right).\nonumber
 \end{align}
Thus, $V_k(x)$ is also an increasing function of $x$. By induction, for any $k$, $V_k(x)$ is  an increasing function of $x$. Given $x\ge y$, we have
\begin{align}
V(x)-V(y)=&\lim_{k\rightarrow\infty}V_k(x)-\lim_{k\rightarrow\infty}V_k(y)\notag\\
=&\lim_{k\rightarrow\infty}\left[V_k(x)-V_k(y)\right]\ge0.\notag
\end{align}

Finally, we prove the concavity of $V(x)$. One possible way is similar to the proof of Lemma 3.1 in Section III of \cite{ross1983introduction}:

Let $x=\lambda x_1+(1-\lambda)x_2$ where $0<\lambda<1$ and suppose
that the cost state of P1 is originally chosen as follows: A coin having probability
$\lambda$ of landing heads is flipped. If heads appears, then $H$ is chosen as the cost state with probability $x_1$ and if tails appears, then it is chosen
with probability $x_2$. Now the best that we can do if we are not to be told
the outcome of the coin flip is
$V(\lambda x_1+(1-\lambda)x_2)=V(x)$.
On the
other hand, if we are to be told the outcome of the flip, then our minimal
expected cost is $\lambda V(x_1)+(1-\lambda)V(x_2)$. Because this must be at
least as good as the case in which we are to be given no information
about the coin flip (one possible strategy is to ignore this information
apriori), we see that
$$\lambda V(x_1)+(1-\lambda)V(x_2)\le V(\lambda x_1+(1-\lambda)x_2)$$
which shows that $V(x)$ is concave.

Another way is to prove it by induction. Note that $V_1(x)$ is the minimum of two linear functions of $x$. Assume $V_{k-1}(x)$ is  the minimum of some collection of  linear functions of $x$ it follows that the same can be said of $V_k(x)$. Thus, by induction, $V_k(x)$ is a concave function of $x$ and
$$\lambda V_k(x_1)+(1-\lambda)V_k(x_2)\le V_k(\lambda x_1+(1-\lambda)x_2)$$
holds for every $k$. Thus, $V(x)$ is concave.

\section{Proof of Proposition \ref{conj}}\label{conjproof}
Assume $ \beta(2q-1)<2/3$. To prove Proposition \ref{conj}, it suffices to prove the following statement: for any $x$, $y$ $\in[0,1]$ such that $x>y$,
 \begin{equation}\label{sufficient}
Q(x,1)-Q(y,1)>Q(x,2)-Q(y,2).
\end{equation}
This is because if $\eqref{sufficient}$ holds for any $x>y$, the difference of the travel costs of P1 and P2 are monotone, i.e.,
\begin{equation}\label{differ}
Q(x,1)-Q(x,2)
\end{equation}
is strictly  increasing in $x$. Note that when $x=0$, \eqref{differ} is less or equal to 0 and when  $x=1$, \eqref{differ} is nonnegative. Since $V(x)$ is continuous, there is a unique threshold value $x^*$ such that \eqref{differ} is equal to zero and it is optimal to use P1 when $x\le x^*$ and use P2 when $x\ge x^*$ .
Now we consider three cases.\\
Case 1:  $ x\ge 1/2$ and $y\le1/2$.
 Define
$$A(x,y)=Q(x,1)-Q(y,1),\qquad B(x,y)=Q(x,2)-Q(y,2).$$
By optimality equation \eqref{DP} we can prove that
\begin{align}\label{difference}
&V(x)-V(y)=TV(x)-TV(y)\notag\\
&\in[\min\{A(x,y),B(x,y)\},\max\{A(x,y),B(x,y)\}].
\end{align}
We prove \eqref{difference} by considering the following four cases:

i) If $V(x)=Q(x,1)$ and $V(y)=Q(y,1)$, it is clear that \eqref{difference} holds.

ii) If  $V(x)=Q(x,2)$ and $V(y)=Q(y,2)$, it is clear that \eqref{difference} holds.

iii) If $V(x)=Q(x,1)\le Q(x,2)$ and $V(y)=Q(y,2)\le Q(y,1)$, it follows that
$$Q(x,1)-Q(y,1)\le V(x)-V(y)\le Q(x,2)-Q(y,2)$$
and \eqref{difference} holds.

iv) If $V(x)=Q(x,2)\le Q(x,1)$ and $V(y)=Q(y,1)\le Q(y,2)$, it follows that
$$Q(x,2)-Q(y,2)\le V(x)-V(y)\le Q(x,1)-Q(y,1)$$
and \eqref{difference} holds.

Since $ x\ge 1/2 \ge y$, we have
$$x\ge xq+(1-x)(1-q)\ge1/2\ge yq+(1-y)(1-q)\ge y.$$
We have already proven that $V(x)$ is an increasing function of $x$, thus
$$V(x)\ge V(xq+(1-x)(1-q))\ge V(yq+(1-y)(1-q))\ge V(y).$$
It follows that
\begin{align}
&V(x)-V(y)\notag\\
\ge& V(xq+(1-x)(1-q))-V(yq+(1-y)(1-q))\notag\\
\ge&\beta V(xq+(1-x)(1-q))-\beta V(yq+(1-y)(1-q)).\notag
\end{align}
If the equality holds, then $V(z)$ is constant for any $z\ge y$. Then the proposition follows.
Now we assume the equality does not hold, i.e.,
\begin{align}
B(x,y)=&\beta V(xq+(1-x)(1-q))-\beta V(yq+(1-y)(1-q))\notag\\
<&V(x)-V(y)\le\max\{A(x,y),B(x,y)\}.\notag
\end{align}
As a consequence, $A(x,y)>B(x,y)$.\\
Case 2: $x\le 1/2$ and $y\le1/2$. We have
\begin{equation}
\begin{cases}
y\le x\le xq+(1-x)(1-q),\\
y\le yq+(1-y)(1-q)\le xq+(1-x)(1-q).
\end{cases}\notag
\end{equation}
If $x\ge yq+(1-y)(1-q)$, from the concavity of $V(x)$ it follows that
$$V(yq+(1-y)(1-q))-V(y)\ge V(xq+(1-x)(1-q))-V(x).$$
If $x\le yq+(1-y)(1-q)$,  from the concavity of $V(x)$ it follows that
$$V(x)-V(y)\ge V(xq+(1-x)(1-q))-V(yq+(1-y)(1-q)).$$
Similar to Case 1, we can prove that $\eqref{sufficient}$ holds.\\
Case 3: $y\ge 1/2$. First, we prove
\begin{equation}\label{derivative}
V(x)-V(y)\le\frac{c_H-c_L }{1-\beta(2q-1)}(x-y)
\end{equation}
holds for any $x>y\ge 1/2$ by induction. It is clear that \eqref{derivative} holds for $V_1(x)$. Assume \eqref{derivative} holds for $V_{k-1}(x)$. We can define $A_k(x,y)$ and $B_k(x,y)$ similarly as in Case 1 and similar to \eqref{difference}, we can prove that
\begin{align}\label{differencek}
&V_k(x)-V_k(y)=TV_{k-1}(x)-TV_{k-1}(y)\in\notag\\
&[\min\{A_{k-1}(x,y),B_{k-1}(x,y)\},\max\{A_{k-1}(x,y),B_{k-1}(x,y)\}].\nonumber
\end{align}\\
From the induction hypothesis, it follows that
\begin{align}
&A_{k-1}(x,y)\notag\\
=&xc_H+(1-x)c_L+\beta(xp+(1-x)(1-p))\cdot\notag\\
&V_{k-1}\left(\frac{xpq+(1-x)(1-p)(1-q)}{xp+(1-x)(1-p)}\right)+\notag\\
&\beta\big((1-x)p+x(1-p)\big)V_{k-1}\left(\frac{x(1-p)q+(1-x)p(1-q)}{x(1-p)+(1-x)p}\right)\notag\\
&-(yc_H+(1-y)c_L)-\beta(yp+(1-y)(1-p))\cdot\notag\\
&V_{k-1}\left(\frac{ypq+(1-y)(1-p)(1-q)}{yp+(1-y)(1-p)}\right)-\notag\\
&\beta\big((1-y)p+y(1-p)\big)V_{k-1}\left(\frac{y(1-p)q+(1-y)p(1-q)}{y(1-p)+(1-y)p}\right)\notag\\
\le&(x-y)(c_H-c_L)+\beta(2q-1)\frac{c_H-c_L }{1-\beta(2q-1)}(x-y)\notag\\
\le&\frac{c_H-c_L }{1-\beta(2q-1)}(x-y),\notag
\end{align}
and that
\begin{equation}
\begin{split}
B_{k-1}(x,y)=&c_M+\beta V_{k-1}(xq+(1-x)(1-q))-\\
&\left(c_M+\beta V_{k-1}(yq+(1-y)(1-q))\right)\\
\le&\beta(2q-1)(x-y)\frac{c_H-c_L }{1-\beta(2q-1)}\\
\le&\frac{c_H-c_L }{1-\beta(2q-1)}(x-y).
\end{split}\nonumber
\end{equation}
Thus, \eqref{derivative} holds for $V_k(x)$. Therefore, \eqref{derivative} holds for any $x>y\ge 1/2$.

From the concavity of $V(x)$, it follows that
\begin{equation}
\begin{split}
&\left(Q(x,1)-Q(x,2)\right)-\left(Q(y,1)-Q(y,2)\right)\\
=&(x-y)(c_H-c_L)+\beta(yp+(1-y)(1-p))\cdot\\
&\Biggl(V\left(\frac{xpq+(1-x)(1-p)(1-q)}{xp+(1-x)(1-p)}\right)-\\
&V\left(\frac{ypq+(1-y)(1-p)(1-q)}{yp+(1-y)(1-p)}\right)\Biggl)+\\
&\beta((2p-1)(x-y))\Biggl(V\left(\frac{xpq+(1-x)(1-p)(1-q)}{xp+(1-x)(1-p)}\right)-\\
&V\left(\frac{y(1-p)q+(1-y)p(1-q)}{y(1-p)+(1-y)p}\right)\Biggl)+\\
&\beta(x(1-p)+(1-x)p)\Biggl(V\left(\frac{x(1-p)q+(1-x)p(1-q)}{x(1-p)+(1-x)p}\right)\\
&-V\left(\frac{y(1-p)q+(1-y)p(1-q)}{y(1-p)+(1-y)p}\right)\Biggl)-\\
&\beta(V(xq+(1-x)(1-q))-V(yq+(1-y)(1-q)))\\
\ge& (x-y)(c_H-c_L)-\frac{1}{2}\beta\big(V(xq+(1-x)(1-q))-\\
&V(yq+(1-y)(1-q))\big).
\end{split}\nonumber
\end{equation}
From \eqref{derivative} and $\beta(2q-1)<2/3$,  it follows that
\begin{equation}
\begin{split}
&\left(Q(x,1)-Q(x,2)\right)-\left(Q(y,1)-Q(y,2)\right)\\
\ge& (x-y)(c_H-c_L)-\frac{1}{2}\beta\big(V(xq+(1-x)(1-q))-\\
&V(yq+(1-y)(1-q))\big)\\
\ge&(x-y)(c_H-c_L)-\frac{1}{2}\beta(2q-1)(x-y)\frac{c_H-c_L }{1-\beta(2q-1)}\\
=&(x-y)(c_H-c_L)\frac{1-\frac{3}{2}\beta(2q-1)}{1-\beta(2q-1)}\\
>&0.
\end{split}\nonumber
\end{equation}
Thus, \eqref{sufficient} holds for any $x>y\ge 1/2$ and the proposition follows.

\section{Proof of Proposition \ref{poapim}}\label{poapimproof}
   When $\beta=0$, the optimal policy is the same as the myopic policy. Thus, $PoA=\frac{1}{1-\beta}=1$ and the proposition holds. We will assume $\beta\in(0,1)$ and  first show that the price of anarchy must  be larger than or equal to $\frac{1}{1-\beta}$.

   Let $p=1$ and  $q\in(1/2,1)$. Then $c_L=0$, $c_H=c$ and the optimality equation \eqref{DP} can be written as
\begin{equation}\label{DPPoA2}
\begin{split}
V(x)=&\min\{x+\beta (xV(q)+(1-x)V(1-q)),\\
&c_M+\beta V(xq+(1-x)(1-q))\}.
\end{split}
\end{equation}

Since $c_M>0$, we can choose $q$ close enough to 1 and $c$ large enough such that  $(1-q)c<c_M<c/2$. We will compute value function of the myopic policy, i.e., $V_{\pi_m}(x)$. When $xc\le c_M$, myopic policy chooses path P1. Then,
$$V_{\pi_m}(x)=xc+\beta (xV_{\pi_m}(q)+(1-x)V_{\pi_m}(1-q)).$$
When $xc> c_M$, myopic policy chooses path P2. Then,
$$V_{\pi_m}(x)=c_M+\beta V_{\pi_m}(xq+(1-x)(1-q))=\frac{c_M}{1-\beta}.$$

Next we will bound the value function of the optimal policy. Note that
$$V(x)\le xc+\beta (xV(q)+(1-x)V(1-q)),$$
and
$$V(x)\le \frac{c_M}{1-\beta}.$$
It follows that,
\begin{equation}
\left\{
\begin{array}{ll}
V(q)&\le \frac{c_M}{1-\beta},\\
V(1-q)&\le(1-q)c+\beta(1-q)V(q)+\beta qV(1-q).
\end{array}
\right.\nonumber
\end{equation}
Thus,
 \begin{equation}
\left\{
\begin{array}{ll}
V(q)&\le \frac{c_M}{1-\beta},\\
V(1-q)&\le \frac{(1-q)c+\beta(1-q)\frac{c_M}{1-\beta}}{1-\beta q},
\end{array}
\right.\nonumber
\end{equation}
and
$$V(x)\le (\frac{(1-\beta)c+\beta c_M}{1-\beta})(x+(1-x)\frac{\beta(1-q)}{1-\beta q}).$$

Choose a small enough positive number $\epsilon$, let
$$\frac{(1-q)(c_M+\epsilon)}{c_M}<x<\frac{1}{2}$$
and $c=\frac{c_M+\epsilon}{x}$.
Note that such $x$ and $c$ satisfy $(1-q)c<c_M<c/2$ and $xc>c_M$. It follows that
$$V_{\pi_m}(x)=\frac{c_M}{1-\beta},$$
and
$$V(x)\le(\frac{(1-\beta)c+\beta c_M}{1-\beta})(x+(1-x)\frac{\beta(1-q)}{1-\beta q}).$$
Thus,
\begin{align}
PoA\ge& \frac{\frac{c_M}{1-\beta}}{(\frac{(1-\beta)c+\beta c_M}{1-\beta})(x+(1-x)\frac{\beta(1-q)}{1-\beta q})}\notag\\
\xrightarrow{q\rightarrow1}&\frac{\frac{c_M}{1-\beta}}{\frac{(1-\beta)(c_M+\epsilon)+\beta c_Mx}{1-\beta}}\xrightarrow{x,\epsilon\rightarrow0}\frac{1}{1-\beta}.\notag
\end{align}
Next we will show that the price of anarchy must  be less than or equal to $\frac{1}{1-\beta}$.

Note that for any $x$,
\begin{equation}\label{ineq2}
V_{\pi_m}(x)\le\frac{c_M}{1-\beta}.
\end{equation}
We can prove \eqref{ineq2} by induction or by arguing that the cost for each time period can not be larger than $c_M$ since the myopic policy always chooses the path with minimal cost.

Let $V_{\pi_m,k}(x)$ be the $k$-stage cost of myopic policy,
then,
$$V_{\pi_m,1}=\min\{xc_H+(1-x)c_L,c_M\},$$
and for any $k\ge2$, if $xc_H+(1-x)c_L\le c_M$, then
  \begin{align}
V_{\pi_m,k}(x)=&xc_H+(1-x)c_L+\beta(xp+(1-x)(1-p))\cdot\nonumber\\
 &V_{\pi_m,k-1}\left(\frac{xpq+(1-x)(1-p)(1-q)}{xp+(1-x)(1-p)}\right)+\nonumber\\
 &\beta\big((1-x)p+x(1-p)\big)\cdot\notag\\
 &V_{\pi_m,k-1}\left(\frac{x(1-p)q+(1-x)p(1-q)}{x(1-p)+(1-x)p}\right),\nonumber
 \end{align}
if $xc_H+(1-x)c_L>c_M$, then
 $$V_{\pi_m,k}(x)=c_M+\beta V_{\pi_m,k-1}(xq+(1-x)(1-q)).$$
Similar to the proof of Proposition \ref{property}, we can prove that
$$V_{\pi_m}(x)=\lim_{k\rightarrow\infty}V_{\pi_m,k}(x)$$
by contracting mapping theorem.

Then we prove by induction that for any $k$ and $x$,
\begin{equation}\label{ineq1}
V_{\pi_m,k}(x)\le \frac{V_k(x)}{1-\beta}.
\end{equation}
When $k=1$,  \eqref{ineq1} follows from the fact
$$V_{\pi_m,1}(x)=V_1(x).$$
Assume that \eqref{ineq1} holds for $k-1$ and  any $x$. We need to show that it also holds for $k$ and any $x$. Given any $x\in[0,1]$, if $xc_H+(1-x)c_L>c_M$, then it follows from \eqref{ineq2} that
$$V_k(x)\ge\min\{xc_H+(1-x)c_L,c_M\}=c_M\ge (1-\beta)V_{\pi_m,k}(x).$$
Then \eqref{ineq1} holds. If $xc_H+(1-x)c_L\le c_M$, then
  \begin{align}
V_{\pi_m,k}(x)=&xc_H+(1-x)c_L+\beta(xp+(1-x)(1-p))\cdot\nonumber\\
 &V_{\pi_m,k-1}\left(\frac{xpq+(1-x)(1-p)(1-q)}{xp+(1-x)(1-p)}\right)+\nonumber\\
 &\beta\big((1-x)p+x(1-p)\big)\cdot\notag\\
 &V_{\pi_m,k-1}\left(\frac{x(1-p)q+(1-x)p(1-q)}{x(1-p)+(1-x)p}\right),\nonumber
 \end{align}
 and
   \begin{align}
V_{k}(x)=&xc_H+(1-x)c_L+\beta(xp+(1-x)(1-p))\cdot\nonumber\\
 &V_{k-1}\left(\frac{xpq+(1-x)(1-p)(1-q)}{xp+(1-x)(1-p)}\right)+\nonumber\\
 &\beta\big((1-x)p+x(1-p)\big)\cdot\notag\\
 &V_{k-1}\left(\frac{x(1-p)q+(1-x)p(1-q)}{x(1-p)+(1-x)p}\right),\nonumber
 \end{align}
It follows from the induction hypothesis that
$$\frac{V_{\pi_m,k}(x)}{V_{k}(x)}\le\frac{1}{1-\beta}.$$
Thus, \eqref{ineq1} holds for any $k$ and  any $x$. Then by letting $k\rightarrow\infty$, we get
$$\frac{V_{\pi_m}(x)}{V(x)}\le\frac{1}{1-\beta}.$$
Therefore, $PoA\le \frac{1}{1-\beta}$.

 \section{Proof of Proposition \ref{poapie}}\label{poapieproof}
We consider the similar instance in the proof of Proposition \ref{poapim} where $p=1$ and $q\in(1/2,1)$. Then $c_L=0$, $c_H=c$. Here, different from  the proof of Proposition \ref{poapim}, we let $x=0$, and $c>2c_M$, then
$$V_{\pi_\emptyset}(x)=\frac{c_M}{(1-\beta)}.$$
If $\beta\in(0,1)$, according to the proof of Proposition \ref{poapim}
\begin{align}
V(x)\le&(\frac{(1-\beta)c+\beta c_M}{1-\beta})(x+(1-x)\frac{\beta(1-q)}{1-\beta q})\notag\\
=&\frac{(1-\beta)c+\beta c_M }{1-\beta}\frac{\beta(1-q)}{1-\beta q}.\notag
\end{align}
It follows that
$$PoA\ge \frac{\frac{c_M}{(1-\beta)}}{\frac{(1-\beta)c+\beta c_M }{1-\beta}\frac{\beta(1-q)}{1-\beta q}}\xrightarrow{q\rightarrow1}\infty.$$
If $\beta=0$, then $V(x)=0$. Thus, $PoA=\infty$.

\end{document}